\newcommand{\Xcite}[1]{{\let\cite@adjust\empty\cite{#1}}}
\theoremstyle{plain}
\newtheorem{theorem}{Theorem}[section]
\newtheorem{lemma}[theorem]{Lemma}
\newtheorem{proposition}[theorem]{Proposition}
\newtheorem{cor}[theorem]{Corollary}
\newcommand{\vardim}{g}
\newcommand{\RR}{\mathbb{R}}
\newcommand{\poly}{\operatorname{Poly}}
\newcommand{\sign}{\operatorname{sign}}
\newcommand{\XX}{\mathbb{X}}
\newcommand{\YY}{\mathbb{Y}}
\newcommand{\Q}{\mathcal{Q}}
\newcommand{\PP}{\mathcal{P}}
\let\eps\varepsilon
\let\bd\partial
\def\DS{\mathsf{DS}}
\def\column#1{{{\uparrow}#1}}
\def\shadow#1{{{\downarrow}#1}}
\def\A{\mathcal{A}}
\def\C{\mathcal{C}}
\def\G{\mathcal{G}}
\def\K{\mathcal{K}}
\def\V{\mathcal{V}}
\def\P{\mathcal{P}}
\def\Q{\mathcal{Q}}
\def\R{\mathcal{R}}
\def\S{\EuScript{S}}
\def\T{\mathcal{T}}
\def\ZZ{\mathsf{Z}}
\def\etal{\textit{et~al.}}
\DeclareMathOperator{\polylog}{polylog}
\DeclareMathOperator{\St}{St}
\DeclarePairedDelimiter\abs{\lvert}{\rvert}
\begin{document}

\title{An Efficient Algorithm for Generalized Polynomial Partitioning and Its Applications\thanks{
    P.~Agarwal was supported by NSF under grants CCF-15-13816, CCF-15-46392, and IIS-14-08846, by an ARO grant W911NF-15-1-0408, and by BSF Grant 2012/229 from the U.S.-Israel Binational Science Foundation.
    B.~Aronov was supported by NSF grants CCF-12-18791 and CCF-15-40656, and by grant 2014/170 from the US-Israel Binational Science Foundation.
    E.~Ezra was supported by NSF CAREER grant CCF:AF 1553354 and by Grant 824/17 from the Israel Science Foundation.
    J.~Zahl was supported by a NSERC Discovery grant.
    A preliminary version of this work appeared in the \emph{Proceedings of the 35th International Symposium on Computational Geometry} \cite{AAEZ}.}
}
\author{
  Pankaj K. Agarwal%
  \footnote{Department of Computer Science, Duke University, Box 90129, Durham, NC 27708-0129 USA;
    \texttt{pankaj@cs.duke.edu}.
  }
  \and
  Boris Aronov%
  \footnote{Department of Computer Science and Engineering,
    Tandon School of Engineering, New York University, Brooklyn, NY~11201, USA;
    \texttt{boris.aronov@nyu.edu}.
  }
  \and
  Esther Ezra%
  \footnote{Department of Computer Science, Bar-Ilan University, Ramat Gan, Israel  and
    School of Math, Georgia Institute of Technology, Atlanta, Georgia 30332, USA;
    \texttt{ezraest@cs.biu.ac.il}.
  }
  \and
  Joshua Zahl\footnote{Department of Mathematics, University of British Columbia, Vancouver, BC~V6T 1Z2, Canada;
    \texttt{jzahl@math.ubc.ca}.}
}

\maketitle

\begin{abstract}
  In 2015, Guth proved that if $\S$ is a collection of $n$ $\vardim$-dimensional semi-algebraic sets in ${\RR}^d$ and if $D\geq 1$ is an integer, then there is a $d$-variate polynomial $P$ of degree at most $D$ so that each connected component of $\RR^d\setminus Z(P)$ intersects $O(n/D^{d-\vardim})$ sets from $\S$. Such a polynomial is called a \emph{generalized partitioning polynomial}. We present a randomized algorithm that computes such polynomials efficiently---the expected running time of our algorithm is linear in $\abs{\S}$. Our approach exploits the technique of \emph{quantifier elimination} combined with that of \emph{$\eps$-samples}. We also present an extension of our construction to \emph{multi-level polynomial partitioning} for semi-algebraic sets in $\RR^d$. 

  We present five applications of our result. The first is a data structure for answering point-enclosure queries among a family of semi-algebraic sets in $\RR^d$ in $O(\log n)$ time, with storage complexity and expected preprocessing time of $O(n^{d+\eps})$. The second is a data structure for answering range-searching queries with semi-algebraic ranges
  in $\RR^d$ in $O(\log n)$ time, with $O(n^{t+\eps})$ storage and expected preprocessing time, where $t > 0$ is an integer that depends on $d$ and the description complexity of the ranges. The third is a data structure for answering vertical ray-shooting queries among semi-algebraic sets in $\RR^{d}$ in $O(\log^2 n)$ time, with $O(n^{d+\eps})$ storage and expected preprocessing time. The fourth is an efficient algorithm for cutting algebraic curves in $\RR^2$ into pseudo-segments. The fifth application is for eliminating depth cycles among triangles in $\RR^3$, where we show a nearly-optimal algorithm to cut $n$ pairwise disjoint non-vertical triangles in~${\RR}^3$ into pieces that form a depth order.
\end{abstract}

\section{Introduction}
\label{introSection}

In 2015, Guth \cite{Guth-15} proved that, if $\S$ is a collection of $n$ $\vardim$-dimensional semi-algebraic sets\footnote{%
  Roughly speaking, a semi-algebraic set in $\RR^d$ is the set of points in $\RR^d$ that satisfy a Boolean formula over a set of polynomial inequalities; the complexity of a semi-algebraic set is the number of polynomials defining the set and their maximum degree.
  See \cite[Chapter 2]{BCR-98} for formal definitions of a semi-algebraic set and its dimension.}
in ${\RR}^d$, each of complexity at most $b$, and if $D\geq 1$ is an integer, then there is a $d$-variate polynomial $P$ of degree at most $D$ so that each connected component of $\RR^d\setminus Z(P)$ intersects $O(n/D^{d-\vardim})$ sets from $\S$,\footnote{%
  Guth stated his result for the special case where the semi-algebraic sets are real algebraic varieties. 
However, since any semi-algebraic set is contained in a variety of the same dimension, Guth's theorem immediately extends to multisets of semi-algebraic sets.}
where the implicit constant in the $O(\cdot)$ notation depends on $d$ and $b$.
We refer to such a polynomial $P$ as a \emph{generalized partitioning polynomial}.
Guth's proof established the existence of a generalized partitioning polynomial, but it did not offer an efficient algorithm to compute such a polynomial for a given collection of semi-algebraic sets.  In this paper we study the problem of computing a generalized partitioning polynomial efficiently and present a few applications of such an algorithm. 

As common in computational geometry, we use the \emph{real} RAM model of computation, where we can compute exactly with arbitrary real numbers and each arithmetic operation is executed in constant time. 
Throughout this paper we assume $d,b,$ and $\eps$ to be constants. Whenever big-O notation is used, the implicit constant may depend on $d,b,$ and $\eps$.

\paragraph{Related work.}
In 2010, Guth and Katz \cite{GK-15} essentially resolved the Erd\H{o}s distinct distances problem in the plane. A major ingredient in their proof was a partitioning theorem for points in $\RR^d$. Specifically, they proved that given a set of $n$ points in $\RR^d$ and an integer $D\geq 1$, there is a $d$-variate \emph{partitioning polynomial} $P$ of degree at most $D$, so that each connected component of~$\RR^d \setminus Z(P)$ contains $O(n/D^d)$ points from the set. Their polynomial partitioning theorem led to a flurry of new results in combinatorial and incidence geometry, harmonic analysis, and theoretical computer science; see e.g.~\cite{Guth-book}.

The Guth-Katz result established the existence of a partitioning polynomial, but it did not provide an efficient algorithm to compute such a polynomial. In \cite{AMS-13}, Agarwal, Matou\v{s}ek, and Sharir developed an efficient randomized algorithm to compute partitioning polynomials, matching the degree bound obtained in \cite{GK-15} up to a constant factor.
They used this algorithm to construct a linear-size data structure that can answer semi-algebraic range queries (see our results subsection for a formal definition of semi-algebraic range query) amid a 
set of $n$ points in $\RR^d$ in time $O(n^{1-1/d}\polylog(n))$, which is near optimal.%

A major open question in geometric range searching is whether a semi-algebraic range query can be answered in $O(\log n)$ time using a data structure of size roughly $n^d$; such a data structure is known only in very special cases, e.g., when query ranges are simplices \cite{Ag-survey}. 
If $t$ is the number of (real-valued) parameters needed to represent query ranges, then
the best known data structure for semi-algebraic range searching uses $O(n^{t+\eps})$ space for $t\le4$ and 
$O(n^{2t-4+\eps})$ space for $t>4$ \cite{Ag-survey}. 

In \cite{AEZ-19}, the last three authors developed an efficient algorithm for constructing a partition of $\RR^3$ adapted to a set of space curves. This partition is not given by a partitioning polynomial, but it shares many of the same properties. For other settings, however, no effective method is known for computing a partitioning polynomial.

\paragraph{Our results.}
Our main result is an efficient algorithm for computing a generalized partitioning polynomial for a family of semi-algebraic sets (Theorem~\ref{efficientlyComputePartition}): Given a set $\S$ of $n$~semi-algebraic sets in~$\RR^d$, 
each of complexity at most $b$ for some constant $b>0$, 
our algorithm computes a generalized partitioning polynomial $P$ of given degree $D$ in randomized expected time $O(n \poly(D)+e^{\poly(D)})$, where the constant of proportionality depends on $b$ and $d$.
We first show an algorithm whose running time is exponential in $n$ using \emph{quantifier elimination} (Theorem~\ref{efficientPartitioningFewVarieties}), and then present a speed-up using 
random sampling
(Lemma~\ref{epsSamplingPartitioningThm}).
Not only does our algorithm compute the polynomial $P$, but it also computes within same time bound a semi-algebraic representation of each connected component of $\RR^d\setminus Z(P)$ and the subset of input sets intersecting each connected component.

We note that neither Guth's result nor our algorithm provide any guarantee on the number of input sets intersecting $V\coloneqq Z(P)$. In fact, all of them might intersect $V$. Although the above polynomial partitioning is sufficient for some applications, others call for a  further partition of $V$ into smaller regions so that each of them is crossed by a relatively small number of sets in $\S$.\footnote{ Given two sets $A$ and $B$, we say that $A$ \emph{crosses}  $B$ if $A\cap B\neq\emptyset$ and $B\not\subset A$.} 
Section~\ref{sec:multi_level} presents an algorithm for computing
such a decomposition (Theorem~\ref{multiLevelPartitionBdry}),
the so-called  \emph{multi-level polynomial partitioning} of $\RR^d$,
by applying Theorem~\ref{efficientlyComputePartition} iteratively, 
Such multi-level partitioning schemes have appeared previously in the literature (see, e.g., \cite{MP, FPSSZ})
but the version developed in this paper has some innovations that generalize
and somewhat simplify previous approaches. A key ingredient of our analysis is
to use \emph{orthogonal projection} and show that 
$V$ can be projected onto a subspace of dimension $\dim(V)$ so that the 
the dimension of the projection of $V$ does not decrease (cf.  Lemma~\ref{goodOrthogonalProjection}). A similar lemma was the key ingredient in the partitioning scheme by Matou\v{s}ek and Pat\'{a}kov\'{a}~\cite{MP}, but our proof is constructive and we present a deterministic algorithm to compute the projection. 
As a consequence of our projection lemma, we can project onto this
subspace all sets $S \cap V$, for any $S \in \S$, construct the
corresponding polynomial partitioning in this subspace, and pull back the polynomial to $\RR^d$.
This result is interesting in its own interest and is likely to be useful for problems beyond those described in this paper.
\smallskip

Finally, Section~\ref{sec:applications} presents five applications of our algorithm:
\begin{description}
	\item[\textbf{\textit{Point-enclosure queries.}}]
		Let $\S$ be a family of $n$ semi-algebraic sets in $\RR^d$, each of constant complexity.
	Each set in $\S$ is assigned a weight that belongs to a semigroup. We 
	present a data structure of size $O(n^{d+\eps})$, for any constant $\eps>0$, that can compute, 
	in $O(\log n)$ time, the cumulative weight of the sets in $\S$ containing a query point.
	The data structure can be constructed in $O(n^{d+\eps})$ randomized expected time. This is a 
	significant improvement over the best known data structure by Koltun \cite{Koltun-04}, 
	for $d > 4$, that uses $O(n^{2d-4+\eps})$ space.

	\item[\textbf{\textit{Semi-algebraic range queries.}}]
	Let $P$ be a set of $n$ points in $\RR^d$, each of which is assigned a weight in a semigroup, and let 
	$\R$ be a (possibly infinite) family of semi-algebraic sets in $\RR^d$, each of constant complexity.  Suppose that 
	  there exists a positive integer $t$ and an injection $f \colon \R\to \RR^t$, so 
	that for each $p \in P$, the set $f(\{R \in \R \mid p\in R\})$ is a semi-algebraic set 
	in $\RR^t$ of complexity at most $b$.
        We can construct in $O(n^{t+\eps})$ randomized expected 
	time a data structure of size $O(n^{t+\eps})$, for any constant $\eps>0$,
	that can compute in $O(\log n)$ time the cumulative weight of $P\cap R$ for a 
	query range $R\in\R$.  

	\item[\textbf{\textit{Vertical ray-shooting queries.}}]
	Given a family $\S$ of $n$ semi-algebraic sets in $\RR^d$,  we present a data structure of size 
	$O(n^{d+\eps})$, for any constant $\eps>0$, that can answer vertical ray-shooting queries in 
		$O(\log^2{n})$ time.  That is, for a query ray $\rho$ in the $(+x_d)$-direction, the data 
		structure returns the first element of $S$ intersecting $\rho$.
		The data structure can be constructed in $O(n^{d+\eps})$ randomized expected time. 

	\item[\textbf{\textit{Cutting planar curves.}}]
		Let $\C$ be a set of $n$ algebraic curves in $\RR^2$, each of degree at most $b$ for some constant $b>0$, such that no two of them share a component. We wish to cut $\C$ into 
		\emph{pseudo-segments}, i.e., into  a set of (open) Jordan arcs so that each pair of arcs intersects at most once.\footnote{Let $\Gamma$ be a set of pairwise-disjoint (open) Jordan arcs, each pair of which have finite intersections. $\Gamma$ is called a \emph{cutting} of $\C$ if each arc in $\Gamma$ lies on a curve of $\C$ and each curve in $\C$ can be expressed as a finite union of pairwise-disjoint arcs from $\Gamma$ plus finitely many points (the points at which the curves are cut).}
		Sharir and Zahl \cite{SZ-17} showed that $\C$ can be cut into  $O(n^{3/2}\log^{O(1)}n)$ pseudo-segments. Following their technique but exploiting
	Theorem~\ref{thm:efficientlyComputePartition}, we show that $\C$ can be cut in 
		$O(n^{3/2 + \eps})$  randomized expected time into $O(n^{3/2+\eps})$ pieces, for any constant $\eps>0$.
        We comment that the technique by Aronov~\etal \cite{AEZ-19} results in the same asymptotic bound
        on the number of pseudo-segments, however, its running time is quadratic in the number of curves. 
		By plugging our algorithm into the technique in~\cite{SZ-17} and using the algorithm by Agarwal~\etal~\cite{AS-05} for computing marked faces in the arrangement of a set of pseudo-segments, we can compute a set of 
		$m$ marked faces in the arrangement of $\C$ in $O(m^{2/3-\eps}n^{2/3+\eps}+m^{1+\eps}+n^{3/2+\eps})$ randomized expected time.

		\item[\textbf{\textit{Eliminating depth cycles.}}]
        Finally, we follow the technique of Aronov~\etal \cite{AMS-19} to eliminate depth cycles among triangles in $3$-space. Specifically, Aronov~\etal \cite{AMS-19} used the generalized partitioning polynomial of Guth \cite{Guth-15} in order to show that $n$ pairwise disjoint non-vertical triangles in~${\RR}^3$ can be cut into~$O(n^{3/2 + \eps})$ pieces that do not form cycles (and thus form a so-called \emph{depth order}), for any $\eps > 0$. Using Theorem~\ref{efficientlyComputePartition} we present an efficient algorithm to produce these pieces in $O(n^{3/2 + \eps})$  randomized expected time.
\end{description}

\section{Preliminaries}
\label{sec:prelim}

In what follows, the \emph{complexity} of 
a semi-algebraic set $S$ in $\RR^d$ is the minimum value $b$ so that $S$ can be represented as the set of points $x \in \RR^d$ satisfying a (quantifier-free) Boolean formula with at most $b$ atoms of the form $P(x) < 0$ or $P(x) = 0$, with each $P$ being a $d$-variate polynomial of degree at most $b$.

Our analysis makes extensive use of concepts and results from real algebraic geometry and random sampling.
We review them below.

\subsection{Polynomials, partitioning, and quantifier elimination}
\label{sec:poly_partitioning}

\paragraph{Sign conditions.}
	Let $\RR[x_1,\ldots,x_d]$ denote the set of all $d$-variate polynomials with real coefficients. 
For $P_1,\ldots,P_\ell \in \RR[x_1,\ldots,x_d]$, a \emph{sign condition} on $(P_1,\ldots,P_\ell)$ is an element of~$\{-1,0,1\}^\ell$. A \emph{strict sign condition} on $(P_1,\ldots,P_\ell)$ is an element of~$\{-1,1\}^\ell$. A sign condition $(\nu_1,\ldots,\nu_\ell)\in\{-1,0,1\}^\ell$ is \emph{realizable} if the set
\begin{equation}
  \label{realizationOfSignCondition}
  \{ x \in \RR^d \mid \sign(P_j(x))=\nu_j \text{ for each }j=1,\ldots,\ell\}
\end{equation}
is non-empty.  A realizable strict sign condition is defined analogously.  The set \eqref{realizationOfSignCondition} is the \emph{realization} of the sign condition. The set of \emph{realizations of sign conditions} (resp., \emph{realizations of strict sign conditions})
corresponding to the tuple $(P_1,\ldots,P_\ell)$ is the collection of all non-empty sets of the above form. Realizations of sign conditions are pairwise disjoint, and the set of realizations of sign conditions partitions $\RR^d$.
We use $Z(\P)=\bigcap_{P\in\P} Z(P)$ to denote the common zero set of all polynomials in $\P$, i.e., $Z(\P)$ is the realization of the sign sequence $(0, \ldots, 0)$.

While a tuple of $\ell$ polynomials has $3^\ell$ sign conditions and $2^\ell$ strict sign conditions, Milnor and Thom showed that the number of realizable sign conditions is often much smaller. We will use the following version of this fact, which is a special case
of  Proposition 7.31 in \cite{BPR} (see also Theorem~2.2 in \cite{FPSSZ}).

\begin{lemma}
  \label{milnorThomResult}
  Let $2\leq d\leq\ell$ and let  $\P=\{P_1,\ldots,P_{\ell}\}$ be polynomials in $\RR[x_1,\ldots,x_d]$ of degree at most~$t$. Then $Z(\P)$ has at most $t(2t-1)^{d-1}$ connected components, and $P_1,\ldots,P_{\ell}$ have at most $(50t\ell/d)^d$ realizable sign conditions.
\end{lemma}

\paragraph{Arrangements.} 
Let $\PP \subset \RR[x_1,\ldots,x_d]$ be a finite set of polynomials. The \emph{arrangement} of the zero sets of $\PP$, denoted by $\A(\PP)$, is the partition of $\RR^d$ into the connected components of the realizations of sign conditions of $\PP$. Equivalently, $\A(\P)$ is the partition of $\RR^d$  into maximal relatively open connected regions, called \emph{cells}, so that for each cell $\tau$, there is a subset $\PP_\tau$ such that $\tau\subseteq Z(P)$ for each $P\in \PP$ and $\tau\cap Z(P) \ne \emptyset$ for all $P\not\in\PP_\tau$.

We recall Theorem~2.16 from~\cite{basu-survey}:
\begin{proposition}
  \label{pointLocationThm}
	Let $\P$ be a set of $s$~polynomials in $\RR[x_1,\ldots,x_d]$ of degree at most $t$. There is an algorithm that computes a set of points $\Q$ in $\RR^d$, so that each cell of $\A(\P)$ contains exactly one point from $\Q$. This algorithm runs in time $s^d t^{O(d)}$. Furthermore, there is an algorithm that computes the sign of each polynomial in $\P$ at each point of $\Q$. This algorithm runs in time $s^{d+1} t^{O(d)}$.
\end{proposition}

The following proposition is a summary of Theorem~16.11 from~\cite{BPR}.

\begin{proposition}
	\label{arrangement}
	Let $\PP\subset\RR[x_1,\ldots,x_d]$ be a set of $s$ polynomials, each of degree at most $t$. Then the arrangement of the zero sets of $\PP$ has $s^dt^{O(d)}$ cells, and it can be computed in time $T=s^{d+1}t^{O(d^4)}$. Each cell is described as a semi-algebraic set using at most $T$ polynomials of degree at most $t^{O(d^3)}$. Moreover, the algorithm computes a reference point in each cell as well as the adjacency information for the cells, indicating which cells are contained in the boundary of each cell.
\end{proposition}

We also need the following result from~\cite{BB-12}:
\begin{proposition}
	\label{BB-bound}
	Let $\P$ be a set of $s$~polynomials in $\RR[x_1,\ldots,x_d]$, each of  of degree at most $t$, let 
	$V\coloneqq Z(\P)$, and let $\dim V\coloneqq k$. let $\G$ be a set of $m$ polynomials of degree at most $E\ge t$. Then the number of cells of $\A(\P\cup\G)$ contained in  $V$ is $O(1)^d t^{d-k}(mE)^k$.
\end{proposition}

\paragraph{Polynomials and partitioning.}
The partitioning polynomial Guth constructed in \cite{Guth-15} for a family~$\S$ of $n$~semi-algebraic sets in $\RR^d$ is the product of a set of polynomials $P_1\cdots P_{\ell}$, where each $P_k$, $1 \leq k \leq \ell$ has a prescribed degree, and the realization of each of the $2^{\ell}$ strict sign conditions of $\{P_1,\dots,P_{\ell}\}$ intersects few semi-algebraic sets from $\S$. Guth used a variant of the Borsuk-Ulam theorem to guarantee that such a tuple of polynomials exists.

For each positive integer $j$, let $D_j$ be the smallest positive integer so that $\binom{D_j+d}{d} > 2^{j-1}$; we have $D_j\leq d2^{j/d}$. In what follows, we are interested in tuples of $d$-variate polynomials of the form $(P_1,\ldots,P_k)$, where for each index $j$, $P_j$ has degree at most $D_j$. 

Let $(P_1,\ldots,P_k)$ be such a tuple, let $\S$ be a collection of $n$  semi-algebraic sets in $\RR^d$, and let $\alpha \geq 1$. We say that $(P_1,\ldots,P_k)$ is a \emph{$(k,\alpha)$-partitioning tuple} for $\S$ if the realization of each sign condition in $(P_1, \ldots, P_k)$ intersects at most $|\S|/\alpha$ sets from $\S$.%
\footnote{As in \cite{Guth-15}, we work with a $k$-tuple of polynomials instead of a single polynomial so that we can bound the number of sets intersected by the realization of a sign condition rather than by a connected component of a realization.} Guth \cite{Guth-15} proved that if $\alpha$ is chosen appropriately then a $(k,\alpha)$-partitioning tuple is guaranteed to exist:
\begin{proposition}[Generalized Polynomial Partitioning \cite{Guth-15}]
  \label{guthProp}
  Let $\S$ be a family of semi-algebraic sets in $\RR^d$, each of dimension at most $\vardim$ and complexity at most $b$.
  For each $k\geq 1$, there exists a $(k,\alpha)$-partitioning tuple for $\S$, with $\alpha = \Omega(2^{k(1-\vardim/d)})$. 
\end{proposition}

\noindent\textbf{Remark.}
(i) In \cite{Guth-15}, Guth proved a variant of Proposition \ref{guthProp} where $\S$ is a (multi) set of algebraic varieties in $\RR^d$, each of dimension at most $\vardim$. However, since any semi-algebraic set is contained in a variety of the same dimension, Guth's theorem immediately extends to multisets of semi-algebraic sets.

(ii) The big-Omega notation in the bound on $\alpha$ hides a factor $1/b^{O(d)}$.

\paragraph{Singly exponential quantifier elimination.}
Let $h$ and $\ell$ be non-negative integers and let $\P = \{P_1,\ldots,P_s\}\subset\RR[x_1,\ldots,x_h,y_1,\ldots,y_\ell]$.
Let $\Phi(y)$ be a first-order formula given by
\begin{equation}
  \label{typeOfPhi}
	\Phi(y) = (\exists x_1, \ldots, x_h) F(P_1(x,y),\ldots,P_s(x,y)),
\end{equation}
where
$y =(y_1,\ldots,y_\ell)$ is a block of $\ell$ free variables; $x$ is a block of $h$ variables, and $F(P_1,\ldots,P_s)$ is a quantifier-free Boolean formula with atomic predicates of the form $\sign(P_i(x_1,\ldots,x_h, y)) = \sigma$, with $\sigma \in \{-1,0,1\}.$

The Tarski-Seidenberg theorem states that the set of points $y \in \RR^{\ell}$ satisfying the formula~$\Phi(y)$ is semi-algebraic. Proposition \ref{BasuBlockElimination} below is a quantitative version of this result that bounds the number and degree of the polynomial equalities and inequalities needed to describe the set of points satisfying $\Phi(y)$. This proposition is known as a ``singly exponential quantifier elimination,'' and its more general form (where $\Phi(y)$ may contain a mix of $\forall$ and $\exists$ quantifiers) can be found in~\cite[Theorem~2.27]{basu-survey}.
\begin{proposition}
  \label{BasuBlockElimination}
	Let $\P \subset \RR[x_1,\ldots,x_h,y_1,\ldots,y_\ell]$ be a set of at most $s$ polynomials, each of degree at most $t$.  Given a formula $\Phi(y)$ of the form \eqref{typeOfPhi}, there exists an equivalent quantifier-free formula
  \begin{equation}\label{quantifierFree}
    \Psi(y) = \bigvee_{i=1}^I \bigwedge_{j=1}^{J_i}\Big(\bigvee_{n=1}^{N_{i,j}}\sign(P_{ijn}(y)) = \sigma_{ijn}\Big),
  \end{equation}
  where $P_{ijn}$ are polynomials in the variables $y$, $\sigma_{ijn}\in\{-1,0,1\}$, 
  \begin{equation}
    \begin{aligned}
      I&\leq s^{(h + 1) (\ell+1)} t^{O(h \cdot \ell)}, \\
      J_i&\leq s^{(h + 1)} t^{O(h)},\\
      N_{ij}&\leq t^{O(h)},
    \end{aligned}
  \end{equation}
  and the degrees of the polynomials $P_{ijn}(y)$ are bounded by $t^{O(h)}$.
  Moreover, there is an algorithm to compute $\Psi(y)$ in time $s^{(h + 1) (\ell+1)} t^{O(h \cdot \ell)}$.
\end{proposition}

\subsection{Range spaces, VC dimension, and sampling}
\label{sec:eps_approx}

We first recall several standard definitions and results from \cite[Chapter 5]{Har-Peled-11}. 
A \emph{range space} (also known as a set system, or a hypergraph) is a pair $\Sigma=(X,\R)$, where $X$ is a set and $\R$ is a collection of subsets of $X$.
Let $(X,\R)$ be a range space and let $A\subset X$ be a set. We define the \emph{restriction} of $\Sigma$ to $A$, denoted by $\Sigma_A$ to be $(A,\R_A)$, where
$\R_A \coloneqq \{R \cap A  \mid R \in\R\}.$
If $A$ is finite, then $\abs{\R_A} \leq 2^{|A|}$. If equality holds, then we say $A$ is \emph{shattered}.
We define the \emph{shatter function} by
$\pi_{\R}(z) \coloneqq \max_{|A| = z} |\R_A|$.
The \emph{VC dimension} of $\Sigma$,  denoted by $\operatorname{VC-dim}(\Sigma)$, is the largest cardinality of a set shattered by $\R$.
If arbitrarily large finite subsets can be shattered, we say that the VC dimension of $\Sigma$ is infinite.

Let $\Sigma$ be a range space, $A$ a finite subset of $X$, and $0\leq \eps\leq 1$.
A set $B\subset A$ is an \emph{$\eps$-sample} (also known as \emph{$\eps$-approximation}) of $\Sigma_A$ if 
\[
  \Big|\frac{|A\cap R|}{|A|}-\frac{|B\cap R|}{|B|}\Big|\leq\eps \quad \forall R\in\R.
\]
The following classical theorem of Vapnik and Chervonenkis \cite{VC} guarantees that, if the VC-dimension of $\Sigma$ is finite, then for each positive $\eps > 0$, a sufficiently large random sample of $A$ is likely to be an $\eps$-sample.\footnote{The stated bound is not the strongest possible (see, e.g., \cite[Chapter 7]{Har-Peled-11} for an improved bound), but is sufficient for our purposes.}
  
\begin{proposition}[$\eps$-Sample Theorem]
  \label{epsSamplingTheorem}
  Let $\Sigma=(X,\R)$ be a range space of VC dimension at most $d$ and let $A\subset X$ be finite. Let $0<\eps,\delta<1$. Then a random subset $B\subset A$ of cardinality $\frac{8d}{\eps^2}\log\frac{1}{\eps\delta}$ is an $\eps$-sample for $\Sigma_A$ with probability at least $1-\delta$. 
\end{proposition}

We will also need the following simple result, which is a converse of the celebrated Sauer-Shelah lemma.
\begin{lemma}
  \label{SauerConverse}
  Let $\rho>0,C>0$, and let $\Sigma=(X,\R)$ be a range space whose shatter function $\pi_{\R}(z)$ satisfies the bound $\pi_{\R}(z) \leq Cz^{\rho}$ for all positive integers $z$. Then  $\Sigma$  has VC dimension at most $4{\rho} \log(C \rho)$. 
\end{lemma}
\begin{proof}
Suppose that $\Sigma$ has VC dimension $d$. Then $2^d=  \pi_{\R}(d) \leq Cd^{\rho}$. Rearranging, we conclude that $d\leq 4{\rho} \log(C \rho)$.
\end{proof}

We next closely follow the arguments in the proof of Corollary 2.3 from \cite{FPSSZ} to show the following.

\begin{proposition}
  \label{vcDimSemiAlgRange}
Let $V\subset\RR^s\times\RR^t$ be a semi-algebraic set of complexity $b$.
  For each $y\in\RR^t$, define $R_y \coloneqq \{x \in \RR^s \mid (x,y) \in V\}$. Then the range space $(\RR^s,\{R_y \mid y \in\RR^t\})$ has VC dimension at most $200t^2\log b$. 
\end{proposition}

\begin{proof}
  By assumption, there are polynomials $f_1,\ldots,f_b$ and a Boolean formula~$\Phi$, so that, for $(x,y)\in \RR^s\times\RR^t$, $(x,y)\in V$ if and only if $\Phi(f_1(x,y)\geq 0,\ldots, f_b(x,y)\geq 0)=1$. 

Put $\R \coloneqq \{R_y \mid y \in \RR^t\}$.
Fix a positive integer $z$ and let $p_1,\ldots,p_z\in \RR^s$. Our goal is to bound
\[
\left|\strut\{R \cap \{p_1,\ldots,p_z\} : R \in \R\}\right| = \left|\strut\{R_y \cap \{p_1,\ldots,p_z\} : y\in\RR^t\}\right|.
\]
For each $j=1,\ldots,z,$ define
\[
W_j \coloneqq \{y \in \RR^t \mid \Phi(f_1(p_j,y) \geq 0, f_2(p_j,y) \geq 0, \ldots, f_b(p_j,y) \geq 0) = 1\}.
\]  

Let $A\subset[z]\coloneqq\{1,\ldots,z\}$ and suppose that there exists $y\in\RR^t$ with $y\in W_j$ for each $j\in A$ and $y\not\in W_j$ for each $j\in [z]\setminus A$, i.e., the semi-algebraic set $S_A$ consisting of those points $y\in\RR^t$ satisfying the Boolean formula
\begin{equation*}
  \begin{split}
    &\bigwedge_{j\in A}\big(\Phi(f_1(p_j,y)\geq 0,f_2(p_j,y)\geq 0,\ldots,f_b(p_j,y)\geq 0)=1\big)\ \\
    &\quad \wedge\ \bigwedge_{j\in[z]\setminus A}\big(\Phi(f_1(p_j,y)\geq 0,f_2(p_j,y)\geq 0,\ldots,f_b(p_j,y)\geq 0)=0\big)
  \end{split}
\end{equation*}
is non-empty. Observe that if $A$ and $A^\prime$ are distinct subsets of $[z]$, then $S_A$ and $S_{A^\prime}$ are disjoint and, in fact,
\[
  \left|\strut\{R_y \cap \{p_1,\ldots,p_z\} : y\in\RR^t\}\right| =
  \left|\strut \{A \subset [z] : S_A \neq \emptyset\}\right|. 
\]
Each of the non-empty sets $S_A$ contains at least one realization of a sign condition of the
$bz$ polynomials 
\[
\{f_i(p_j,y) \mid i=1,\ldots,b;\ j=1,\ldots,z\},
\]
each of degree at most $b$.
By Lemma \ref{milnorThomResult}, these polynomials determine at most $(50\cdot b\cdot bz/t)^t\leq (50b^2z)^t$ realizable sign conditions. Thus 
\begin{equation}
  \label{boundOnShattering}
  |\{R_y \cap \{p_1,\ldots,p_z\} : y \in \RR^t\}|\leq (50b^2z)^t.
\end{equation}
Since \eqref{boundOnShattering} holds for every choice of $p_1,\ldots,p_z\in\RR^s$, we conclude that
\[
\pi_{\R}(z)\leq (50b^2z)^t.
\]
By Lemma \ref{SauerConverse}, $(\RR^s, \{R_y \mid y \in \RR^t\})$ has VC dimension at most $4t\log((50b^2)^t t)\leq 200t^2\log b$.
\end{proof}

\section{Computing a Generalized Polynomial Partition} %
\label{proofOfThmEpsSamplingPartitioningThmSec}

In this section we obtain the main result of the paper: given a collection $\S$ of semi-algebraic sets in~$\RR^d$, each of dimension at most~$\vardim$ and complexity at most $b$, a~$(k,\Omega(D^{d-\vardim}))$-partitioning tuple for~$\S$ can be computed efficiently. We obtain this result in several steps.  Given a semi-algebraic set~$S$, a sign condition $\sigma\in \{-1,+1\}^k$, and a positive integer $b>0$, we first show that the set of $k$-tuples of degree-$b$ polynomials whose realization of $\sigma$ intersects $S$ is a semi-algebraic set.  This in turn implies that, if $S_1,\ldots,S_n$ are semi-algebraic sets and if $m\leq n$ is a parameter, then the set of $k$-tuples of degree-$b$ polynomials 
whose realization intersects at most $m$ of the sets $S_1,\ldots,S_n$ is semi-algebraic. We use a quantifier-elimination algorithm to find a desired $k$-tuple. 
Unfortunately, the running time of the algorithm is exponential in $n$. We reduce the running time of the algorithm by using a random sampling technique --- we show that it suffices to compute a partitioning tuple with respect to a small-size random subset of~$\S$.

\subsection{The parameter space of polynomials}
\label{sec:poly-para}
In this section we will discuss the parameter space of tuples of polynomials $(P_1,\ldots,P_k)$ that are candidates for the $(k,\alpha)$-partitioning tuple from Proposition \ref{guthProp}.

The set of polynomials in $\RR[x_1,\ldots,x_d]$ of degree at most~$b$ is a real vector space of dimension~$\binom{b+d}{d}$; we identify this vector space with $\RR^{\binom{b+d}{d}}$. For a point~$q\in \RR^{\binom{b+d}{d}}$, let $P_q\in\RR[x_1,\ldots,x_d]$ be the corresponding polynomial of degree at most $b$.  An important relationship between the vector space $\RR^d$ and the set of $d$-variate polynomials of degree at most $b$ can be expressed using a polynomial that we call  $Q(q,x)$. For $q\in \RR^{\binom{b+d}{d}}$ and $x\in\RR^d$, define $Q(q,x) \coloneqq P_q(x)$. 
Since we can write $Q(q,x) = \sum_{i=1}^{\binom{b+d}{d}}q_i H_i(x)$, where $H_i$ is a monomial of degree at most $b$, the following observation, which will be useful later, is straightforward:

\begin{lemma}
  \label{specialPoly}
The polynomial $Q$ has degree $b+1$.
\end{lemma}

For a positive integer $j$, we recall that $D_j$ is the smallest positive integer so that 
$$\binom{D_j+d}{d} > 2^{j-1};$$ 
we have $D_j\leq d2^{j/d}$;
we will often write this as $D_j = O(2^{j/d})$.
Let $k$ be a positive integer.  Define the product space
\begin{equation}
  \label{defnOfY}
\YY_k \coloneqq \bigtimes_{j=1}^k \RR^{\binom{D_j+d}{d}}.
\end{equation}
We identify each point $y=(y_1, \ldots, y_k) \in \YY_k$, where $y_j \in \RR^{\binom{D_j+d}{d}}$,
with a $k$-tuple of $d$-variate polynomials $\mathbf{P}_y=(P_{y_1},\ldots,P_{y_k})$. For each $j=1,\ldots,k$, 
$$\deg(P_j) \leq D_j =  O(2^{j/d})$$
and thus $\deg\big(\prod_{j=1}^k P_j\big)=O(2^{k/d})$.

\subsection{The parameter space of semi-algebraic sets}
\label{sec:semi-para}

Fix positive integers $b$, $d$, $\vardim$, and $k$, and let $D \coloneqq 2^{k/d}$.
Hereafter we assume that $D = \Omega(2^b)$, which can be enforced by choosing $k$ sufficiently large.

As above, let $\S$ be a family of semi-algebraic sets in $\RR^d$, each of dimension at most~$\vardim$ and complexity at most $b$. 
Let $G\colon\{0,1\}^b\to\{0,1\}$ be a Boolean function.
Let $\XX \coloneqq \big(\RR^{\binom{b+d}{d}}\big)^b$. We identify a point $x = (q_1,\ldots,q_b)\in \XX$ with the semi-algebraic set 
\[
  Z_{x,G} \coloneqq \{v\in \RR^d \mid G(P_{q_1}(v)\geq 0,\ldots,P_{q_b}(v)\geq 0)=1\} \subset\RR^d.
\]
Observe that each semi-algebraic set in $\S$ is of the form $Z_{x,G}$ for some choice of $x\in \XX$ and a Boolean function $G$.
Let $\YY \coloneqq \YY_k$ (see~\eqref{defnOfY}).
For each $y\in \YY$, define $S_y\coloneqq\{u \in \RR^d \mid P_1(u) > 0, \ldots, P_k(u) > 0\}$,
where $(P_1,\ldots,P_k)$ is the tuple associated with $y$.
Define
\[
  W_G \coloneqq \{(x,y)\in \XX \times \YY \mid Z_{x,G} \cap S_y \neq \emptyset\}.
\]

\begin{proposition}
  \label{WHasLowComplexity}
  The set $W_G$ is semi-algebraic; it is defined by $O(e^{\poly(D)})$ polynomials, each of degree~$D^{O(d)}$.
\end{proposition}

\begin{proof}
Define $\ZZ \coloneqq \{(x,y,v)\in \XX \times \YY \times \RR^d \mid v \in Z_{x,G}\cap S_y \}$.
The condition $v\in Z_{x,G}$ is a Boolean condition on $b$ polynomial inequalities in $\RR^d$. By Remark~\ref{specialPoly}, each of these polynomials has degree at most~$b+1$, where the coefficients $x = (q_1,\ldots,q_b)\in \XX$ are now viewed as variables in addition to $v$. Similarly, the condition $v\in S_y$ consists of $k$ polynomial inequalities in $\RR^d$, each of degree at most~$D+1$, where the coefficients $y = (y_1,\ldots,y_k)\in \YY$ are now viewed as variables in addition to $v$ .

This means that there exists a set of polynomials $\Q = \{Q_1,\ldots,Q_{b+k}\}$ 
of degree~$b+D+1$ in the variables $x,y,v$, and a Boolean function $F(z_1,\ldots,z_{b+k})$ so that
\[
\ZZ = \{(x,y,v)\in \XX \times \YY \times \RR^d \mid F(Q_1(x,y,v)\ge 0, \ldots, Q_{b+k}(z,y,v)\ge 0) = 1\}.
\]
With the above definitions 
\[
W_G = \{(x,y) \mid \exists v\, F(Q_1(x,y,v),\ldots,Q_{b+k}(x,y,v)) = 1\}.
\]

We now apply Proposition~\ref{BasuBlockElimination}. We have a set $\Q$ of $s \coloneqq b+k$ polynomials, each of degree at most~$t \coloneqq b+D+1$.
The variables $h$ and $\ell$ from the hypothesis of Proposition~\ref{BasuBlockElimination} are set to $h \coloneqq d$ and $\ell \coloneqq O\left( {\binom{b+d}{d}}^b + D^{d}\right) = \poly(D)$; 
recall that $D$ is sufficiently larger than $b$, and thus $\ell$ is a suitably chosen polynomial function of $D$.
With these assignments,  Proposition~\ref{BasuBlockElimination} says that $W_G$ can be expressed as a quantifier-free formula of the form
\begin{equation}
  \label{WQuantFree}
  \bigvee_{i=1}^I \bigwedge_{j=1}^{J_i}\Big(\bigvee_{n=1}^{N_{i,j}}\sign(P_{ijn}(x,y)) = \sigma_{ijn}\Big),
\end{equation}
where $P_{ijn}$ are polynomials in the variables $(x,y)$, $\sigma_{ijn}\in\{-1,0,1\}$,
\begin{equation}
\begin{aligned}
I&\leq (b+k)^{\poly(D) }(b+D)^{\poly(D)} = O(e^{\poly(D)}),\\
J_i&\leq (b+k)^{d+1} (b+D)^{O(d)} = O(D^{O(d)}),\\
N_{ij}&\leq (b+D)^{O(d)} = O(D^{O(d)}),
\end{aligned}
\end{equation}
where the degrees of the polynomials $P_{ijn}(y)$ are bounded by $(b+D)^{O(d)} = D^{O(d)}$.

Summarizing, the quantifier-free formula \eqref{WQuantFree} for $W_G$ is a 
Boolean combination of $O(e^{\poly(D)})$ polynomial inequalities, each of degree~$D^{O(d)}$, as claimed.
\end{proof}

\subsection{A singly-exponential algorithm}
\label{sec:singly_exp_alg}

In this section, we discuss how to compute a $(k,\alpha)$-partitioning tuple (for an appropriate value of~$\alpha$) for a small number $m$ of semi-algebraic sets.

\begin{theorem}
  \label{efficientPartitioningFewVarieties}
  Let $\S$ be a family of $n$ semi-algebraic sets in $\RR^d$, each of
  dimension at most $\vardim$ and complexity at most $b$. Let $1\leq k\leq\log n$ and let $D \coloneqq 2^{k/d}$. Then a $(k,\Omega(D^{d-\vardim}))$-partitioning tuple for $\S$ can be computed in $O(e^{\poly(m)})$ time.
\end{theorem}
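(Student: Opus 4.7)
The plan is to reduce the problem to enumerating the cells of a semi-algebraic arrangement in the parameter space $\mathbb{Y}_k$, and then apply the point-location algorithm of Proposition~\ref{pointLocationThm}. Since Proposition~\ref{guthProp} guarantees the existence of a $(k,\Omega_{b,d}(D^{d-\vardim}))$-partitioning tuple, the algorithm only needs to locate one by searching over cell representatives.

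First, I would set up the relevant semi-algebraic sets in $\mathbb{Y}_k$. For each $S_i\in\S$, represented as $Z_{x_i,G_i}$ with $x_i\in\mathbb{X}$ and Boolean function $G_i$, and for each strict sign condition $\sigma\in\{-1,+1\}^k$, define
\[
A_{i,\sigma} \coloneqq \{y\in\mathbb{Y}_k : S_y^\sigma \cap S_i \neq \emptyset\},
\qquad
B_\sigma \coloneqq \{y\in\mathbb{Y}_k : S_y^\sigma \neq \emptyset\},
\]
where $S_y^\sigma$ denotes the realization of $\sigma$ by the tuple associated with $y$. The argument for Theorem~\ref{WHasLowComplexity} adapts directly---fix $x$ at $x_i$ so it is no longer quantified, and replace the all-positive sign condition by $\sigma$---showing that each $A_{i,\sigma}$ and each $B_\sigma$ is semi-algebraic inside the $\ell$-dimensional space $\mathbb{Y}_k$, with $\ell = O(D^d)$, described by $O_b(e^{\poly(D)})$ polynomials of degree $D^{O_b(d)}$. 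Collect all these polynomials into a single set $\Pi$. Because $k\le\log m$ gives $2^k\le m$ and $D\le m^{1/d}$, we have $|\Pi|\le m\cdot 2^k\cdot O_b(e^{\poly(D)}) = O_b(e^{\poly(m)})$, degrees at most $m^{O_b(1)}$, and $\ell\le 2^k\le m$.

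Next, I would apply Proposition~\ref{pointLocationThm} to $\Pi$ inside $\mathbb{Y}_k\cong\RR^\ell$. This produces, in time $|\Pi|^{\ell+1}\cdot m^{O_b(\ell)} = O_{b,d}(e^{\poly(m)})$, a representative point $y$ for every semi-algebraically connected component of every realizable sign condition of $\Pi$, together with the sign vector of $\Pi$ at each such $y$. Because membership in any $A_{i,\sigma}$ or $B_\sigma$ is a Boolean formula in the signs of polynomials in $\Pi$, this information is constant on each cell and can be read off directly from the sign vector. I would then iterate over the representatives, discard any $y$ for which some $B_\sigma$ fails, and select one that minimizes $\max_\sigma|\{i : y\in A_{i,\sigma}\}|$. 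By Proposition~\ref{guthProp}, the cell containing Guth's partitioning tuple has this maximum at most $m/(c_{b,d}\,D^{d-\vardim})$, so the selected $y$ automatically satisfies the required bound and every strict sign condition of its associated polynomial tuple is realizable.

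The main obstacle is bookkeeping the complexity estimates so that nothing grows faster than singly exponential in $m$. Even though each $A_{i,\sigma}$ is described by $e^{\poly(D)}$ polynomials in an $O(D^d)$-dimensional space, the constraint $k\le\log m$ forces $D\le m^{1/d}$ and $\ell\le m$, so $\log|\Pi|=\poly(m)$ and $\ell\cdot\log|\Pi|=\poly(m)$, which is precisely what is needed for the bound in Proposition~\ref{pointLocationThm} to evaluate to $O_{b,d}(e^{\poly(m)})$; any looser control over $D$ or $\ell$ in terms of $m$ would push the bound to doubly exponential. A secondary point is to confirm that the sign list returned by Proposition~\ref{pointLocationThm} suffices to evaluate the defining formulas of the $A_{i,\sigma}$ and $B_\sigma$, which follows because those formulas are Boolean combinations of sign predicates on polynomials in~$\Pi$.
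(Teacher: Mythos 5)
Your proposal is correct and follows essentially the same route as the paper: parameterize tuples by $\mathbb{Y}_k$, use the quantifier-elimination argument of Theorem~\ref{WHasLowComplexity} to show that the sets $I_{S,\sigma}$ (your $A_{i,\sigma}$) are semi-algebraic with singly-exponential complexity in $D$, invoke Proposition~\ref{guthProp} for existence, and search for a point via Proposition~\ref{pointLocationThm}. The only (harmless) difference is that the paper first assembles the set of good tuples explicitly as a Boolean combination involving a union over all subsets of $\S$ of size at least $N$ and then locates a point in it, whereas you evaluate the intersection counts cell-by-cell from the sign vectors; your version also makes explicit, via the sets $B_\sigma$, the check that all $2^k$ strict sign conditions are realizable.
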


\begin{proof}
  Set $\YY \coloneqq \YY_k$. As above, we identify points in $\YY$ with $k$-tuples $(P_1,\ldots,P_k)$ of polynomials. Since the class of semi-algebraic sets is closed under projections, the argument in Proposition \ref{WHasLowComplexity} shows that, for each $S \in \S$ and each $\sigma \in \{-1,1\}^k$, 
\[
I_{S,\sigma} \coloneqq \{y \in \YY \mid S \cap\{\sigma_1 P_1 > 0, \sigma_2 P_2 > 0, \ldots, \sigma_k P_k > 0\} \neq\emptyset \}
\]
is a semi-algebraic set in $\YY$ that can be expressed as a Boolean combination of $O(e^{\poly(D)})$ polynomials, each of degree $D^{O(d)}$. Moreover, it can be computed in time $O(e^{\poly(D)})$ (see
Proposition~\ref{BasuBlockElimination}).

Let $C_{b,d}$ be a constant to be specified later (the constant will depend only on $b$ and $d$) and let $N \coloneqq C_{b,d}nD^{\vardim-d}+1$; observe that $N = O(n)$. For each $\sigma\in\{-1,1\}^k$ and for each set $\S^\prime \subset \S$ of cardinality $|\S^\prime| \ge N$, the set
$\{y \in \YY \mid y \in I_{S, \sigma}\text{ for every } S \in\S^\prime\}$
is a semi-algebraic set in $\YY$ that can be expressed as a Boolean combination of $O(N^\prime e^{\poly(D)}) = O(n e^{\poly(D)})$ polynomials, each of degree $D^{O(d)}$, where $N' \coloneqq \abs{\S^\prime}$.
Therefore
\begin{equation}
  \label{badSet}
  \K \coloneqq
	\bigcup_{\substack{{S}^\prime \subset \S \\ |\S^\prime| \ge N}} \{y \in \YY \mid y\in I_{S,\sigma}\text{ for every } S \in \S^\prime\}
\end{equation}
is a semi-algebraic set in $\YY$ that can be expressed as a Boolean combination of
\[
  \sum_{\substack{{S}^\prime \subset \S \\ |\S^\prime| \ge N}} O\Biggl(\binom{n}{|S^\prime|} n e^{\poly(D)}\Biggr) = O(e^{n + \poly(D)}) = O(e^{\poly(n)})
\]
polynomials, each of degree $D^{O(d)}$. Since the complement of a semi-algebraic set is also semi-algebraic, we conclude that 
\[
\operatorname{Good}(\sigma) \coloneqq \YY \setminus \K =
\{y \in \YY \mid y\in I_{S,\sigma}\text{ for fewer than } C_{b,d}|\S|D^{\vardim-d} \text{ sets } S \in \S\}
\]
is a semi-algebraic set in $\YY$ that can be expressed as a Boolean combination of $O(e^{\poly(n)})$ 
polynomials,
each of degree $D^{O(d)}$.
This means that the set
\begin{equation}
  \label{goodPartitioningTuplesSet}
  \bigcap_{\sigma\in\{-1,1\}^k}\operatorname{Good}(\sigma)
\end{equation}
is a semi-algebraic set in $\YY$ that can be expressed as a Boolean combination of $O(e^{\poly(n)})$ polynomials, each of degree $D^{O(d)}$.
Recall that, by assumption, $1 \leq k \leq \log n$ and $D = 2^{k/d}$. It thus follows that the degree is bounded by $\poly(n)$.
Similarly, the dimension of the space $\YY$ is bounded by $\poly(n)$ as well.

Proposition~\ref{guthProp} guarantees that if $C_{b,d}$ is selected sufficiently large, then the set \eqref{goodPartitioningTuplesSet} is non-empty.
By Proposition~\ref{pointLocationThm}, it is possible to locate a point in this set in $O(e^{\poly(n)})$ time,
concluding the proof of the theorem.
\end{proof}

\subsection{Speeding up the algorithm using $\eps$-sampling}
\label{sec:speeding_up}

In this section we first state and prove a result that allows us to reduce a large collection of semi-algebraic sets to a small collection using random sampling. Results of this type are standard in the range-searching literature, but we provide a proof tailored to the current setup.

\begin{lemma}
  \label{epsSamplingPartitioningThm}
  For every choice of positive integers $b$ and $d$, there is a constant $C\coloneqq C_{b,d}$ so that the following holds: Let $C_0$ be a positive integer. Let $\S$ be a collection of $n$ semi-algebraic sets in $\RR^d$, each of dimension at most $0 \le \vardim < d$ and complexity at most $b$. Let $k$ be a positive integer and let $D \coloneqq 2^{k/d}$ (as above, assume $D = \Omega(2^b)$). Let $B\subset \S$ be a randomly chosen subset of $\S$ of size at least $C D^C$ and let $(P_1,\ldots,P_k)$ be a $(k,\frac{D^{d-\vardim}}{C_0})$-partitioning tuple for $B$. Then with probability at least $1/2$,
  the realization of each of the $D^d$ %
  sign conditions of $(P_1,\ldots,P_k)$ intersects $O(C_0 n D^{\vardim-d})$ elements from $\S$, i.e., $(P_1,\ldots,P_k)$ is a $(k,\Omega(\frac{D^{d-\vardim}}{C_0}))$-partitioning tuple for $B$.
\end{lemma}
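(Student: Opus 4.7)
The plan is to view the conclusion as an $\eps$-sample statement for a suitable range space on $\S$. Define the range space $\Sigma = (\S, \R)$, where for each $(y, \sigma) \in \mathbb{Y}_k \times \{-1, 1\}^k$ the range is
\[
  R_{y, \sigma} \coloneqq \{S \in \S : S \cap \{\sigma_1 P_1 > 0, \ldots, \sigma_k P_k > 0\} \neq \emptyset\},
\]
with $\mathbf{P}_y = (P_1, \ldots, P_k)$. A $(k, D^{d-\vardim}/C_0)$-partitioning tuple $\mathbf{P}_{y^*}$ for $B$ is precisely one satisfying $|B \cap R_{y^*, \sigma}|/|B| \leq C_0/D^{d-\vardim}$ for every realizable strict $\sigma$, and the target conclusion is the analogous inequality with $\S$ in place of $B$, up to a constant factor.

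The first and main step is to bound the VC dimension of $\Sigma$ by a polynomial in $D$. Each $S \in \S$ has the form $Z_{x, G}$ for some $x \in \mathbb{X}$ and one of the $2^{2^b} = O_b(1)$ Boolean functions $G$ on $b$ inputs. Fix $G$ and $\sigma$; after replacing each $P_j$ by $\sigma_j P_j$, Theorem~\ref{WHasLowComplexity} produces a semi-algebraic set
\[
  W_{G, \sigma} \subset \mathbb{X} \times \mathbb{Y}_k
\]
of complexity $e^{\poly(D)}$ whose fiber over $y$ indexes those $x$'s for which $Z_{x, G}$ meets the $(y, \sigma)$-cell. Theorem~\ref{vcDimSemiAlgRange}, applied with $n = \dim \mathbb{Y}_k = O(D^d)$ and the complexity parameter $e^{\poly(D)}$, bounds the VC dimension of the fiber system by $200 n^2 \log(e^{\poly(D)}) = \poly_{b,d}(D)$. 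Summing the shatter functions of these range spaces over the $O_b(D^d)$ pairs $(G, \sigma)$ and invoking Proposition~\ref{SauerConverse} then yields $\mathrm{VCdim}(\Sigma) \leq \poly_{b,d}(D)$.

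With the VC-dimension bound in hand, I would apply the $\eps$-sample theorem (Proposition~\ref{epsSamplingTheorem}) with $\eps \coloneqq C_0/(2 D^{d-\vardim})$ and $\delta \coloneqq 1/2$. The required sample size is
\[
  \frac{8\,\mathrm{VCdim}(\Sigma)}{\eps^2} \log \frac{2}{\eps} = \poly_{b,d}(D) \cdot \frac{D^{2(d-\vardim)}}{C_0^2} \leq C D^C
\]
for a suitable $C = C(b, d)$ (using $C_0 \geq 1$). Consequently, with probability at least $1/2$, $B$ is an $\eps$-sample for $\Sigma$, so for the particular tuple $y^*$ that partitions $B$ and every realizable $\sigma$ we get
\[
  \frac{|\S \cap R_{y^*, \sigma}|}{|\S|} \leq \frac{|B \cap R_{y^*, \sigma}|}{|B|} + \eps \leq \frac{C_0}{D^{d-\vardim}} + \frac{C_0}{2 D^{d-\vardim}},
\]
which gives $|\S \cap R_{y^*, \sigma}| = O(|\S| C_0 D^{\vardim - d})$, as required.

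The main obstacle lies in the VC-dimension bound of the first step: Theorem~\ref{WHasLowComplexity} provides a complexity bound for $W_G$ that is \emph{singly exponential} in $D$, which would be catastrophic if substituted naively into a sample-size estimate. The crucial observation is that both Theorem~\ref{vcDimSemiAlgRange} and Proposition~\ref{SauerConverse} depend only logarithmically on complexity, so the $e^{\poly(D)}$ complexity is traded for a polynomial-in-$D$ VC dimension, which is exactly what lets the $\eps$-sample theorem return a sample of size $\poly_{b,d}(D)$.
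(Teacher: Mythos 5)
Your proposal is correct and follows essentially the same route as the paper: bound the VC dimension of the induced semi-algebraic range space by $\poly_{b,d}(D)$ via Theorems~\ref{WHasLowComplexity} and~\ref{vcDimSemiAlgRange} (exploiting that the VC bound is only logarithmic in the $e^{\poly(D)}$ description complexity), then apply the $\eps$-sample theorem with $\eps = \Theta(C_0 D^{\vardim-d})$. The only cosmetic difference is that you index ranges explicitly by sign vectors $\sigma$, whereas the paper uses only the all-positive cell $S_y$ and lets the sign flips be absorbed into $\mathbb{Y}_k$ (each $\mathbb{V}_j$ being a linear subspace); the two formulations are equivalent.
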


\begin{proof} %
Define $\XX$ and $\YY$ as above, and let $G \colon \{0,1\}^b \to \{0,1\}$. For each $y\in \YY$, define the range
\[
R_{y,G} = \{x \in \XX \mid Z_{x,G} \cap S_y \neq \emptyset\}.
\]
Define $\R_G = \{R_{y,G} \mid y \in \YY\}$.
By Proposition \ref{vcDimSemiAlgRange},
the range space $(\XX,\R_G)$ has VC dimension $O(\poly(D))$. Indeed, this follows by writing $\YY = \RR^t$, where $t = O(D^d)$, and using the upper bound $200t^2\log b$ asserted by Proposition \ref{vcDimSemiAlgRange} as well as the assumption $D = \Omega(2^b)$.
Define $\R \coloneqq \bigcup_G \R_G$, where the union is taken over the $2^{2^b}$ Boolean functions $G\colon\{0,1\}^b\to\{0,1\}$. Since $R$ is the union of $2^{2^b}$ set systems, the shatter function grows
by at most a multiplicative factor of $2^{2^b}=O(1)$, and therefore the VC dimension of the range space $\Sigma = (\XX,\R)$ is also $O(\poly(D))$ (this is a standard fact, see, e.g., \cite[Chapter~5]{Har-Peled-11}).

We are now ready to prove the statement of the lemma.
Set $\eps \coloneqq C_1 \cdot C_0 D^{\vardim-d}$, where $C_1 > 1$ is an absolute constant.  
Suppose that $B$ is an $\eps$-sample of $\S$ and that $(P_1,\ldots,P_k)$ is a $(k,\frac{D^{d-\vardim}}{C_0})$-partitioning tuple for $B$.
Then for each range $R \in \R$, we have $|B\cap R| \le |B| C_0 D^{\vardim - d}$. 
Recalling the $\eps$-sample property
\[
\Big|\frac{|\S\cap R|}{|\S|}-\frac{|B\cap R|}{|B|}\Big| \leq \eps 
\]
and the choice of $\eps$ (recall that $C_1 > 1$), we obtain
\[
\Big|\frac{|\S\cap R|}{|\S|}\Big|\leq 2\eps,
\]
and thus
\[
|\S\cap R| = O( C_0 n D^{\vardim - d}).
\]
By Proposition~\ref{epsSamplingTheorem}, it is sufficient to pick $B$ of size
\[
\frac{\operatorname{VC-dim}(\Sigma)}{\eps^2}\log(2/\eps) = O(\poly(D)). \qedhere
\]
\end{proof}

Intuitively, Lemma~\ref{epsSamplingPartitioningThm} states that it is sufficient to consider a random subset $B$ of size polynomial in $D$ in order to obtain an appropriate partitioning tuple for the entire collection~$\S$, with reasonable probability.

We next proceed as follows.
We select a random sample of $\S$ of cardinality $CD^C$ and use Theorem~\ref{efficientPartitioningFewVarieties} to compute the corresponding partitioning tuple $\P = (P_1,\ldots,P_k)$ in $O(e^{\poly(D)})$ time.
By Lemma~\ref{epsSamplingPartitioningThm}, this tuple will be a $(k,\Omega(D^{d-\vardim}))$-partitioning tuple for $\S$ with probability at least $1/2$. 
We verify in $O(n\poly(D))$ time whether $\P$ is a partitioning tuple for $\S$, as described below. If $\P$ does not produce the appropriate partition, we discard it and try again; the expected number of trials is at most $2$.

The verification step is done as follows. For each semi-algebraic set $S \in \S$ we compute the subset of sign conditions of $(P_1,\ldots,P_k)$, with which it has a non-empty intersection. To this end, we restrict each of the polynomials $P_1,\ldots,P_k$ to $S$ and apply Proposition~\ref{pointLocationThm} to this restricted collection, 
thereby obtaining a set of points meeting each
connected component of each of the 
realizable sign conditions, as well as the corresponding list of signs of the restricted polynomials for each of these points.
This is done in 
$D^{O(d)}$ time for a single semi-algebraic set $S \in \S$, and overall 
$n D^{O(d)}$ time, over all sets.
We refer the reader to \cite{BB-12} for further details concerning the complexity of the restriction of $P_1,\ldots,P_k$ to $S$.
We have thus shown the following:

\begin{theorem}
  \label{efficientlyComputePartition}
  Let $\S$ be a collection of $n$ semi-algebraic sets in $\RR^d$, each of which has dimension at most $\vardim$ and complexity at most $b$. Let $k\geq 1$ and let $D \coloneqq 2^{k/d}$. Then a $(k,\Omega(D^{d-\vardim}))$-partitioning tuple for $\S$ can be computed in $O(n\poly(D) + e^{\poly(D)})$ ranomized expected time.
\end{theorem}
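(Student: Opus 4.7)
The plan is to combine the singly-exponential algorithm of Theorem~\ref{efficientPartitioningFewVarieties} with the $\eps$-sampling reduction of Lemma~\ref{epsSamplingPartitioningThm}, in the style of a Monte-Carlo construction with verification. First, I will draw a random sample $B \subset \S$ of cardinality $m = CD^C$, where $C = C(b,d)$ is the constant furnished by Lemma~\ref{epsSamplingPartitioningThm}. Since $m$ is polynomial in $D$, the parameter $\log m$ is $O(\log D)$, so invoking Theorem~\ref{efficientPartitioningFewVarieties} on $B$ with the given~$k$ yields a $(k, \Omega_{b,d}(D^{d-\vardim}))$-partitioning tuple $(P_1,\ldots,P_k)$ for $B$ in time $O_{b,d}(e^{\poly(m)}) = O_{b,d}(e^{\poly(D)})$.

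Second, I will apply Lemma~\ref{epsSamplingPartitioningThm} with $C_0$ equal to the hidden constant in the partitioning bound for $B$. The lemma guarantees that with probability at least $1/2$ (over the choice of $B$), each of the $2^k \leq D^d$ realizable strict sign conditions of $(P_1,\ldots,P_k)$ intersects $O_{b,d}(|\S| D^{\vardim-d})$ sets of $\S$, i.e., $(P_1,\ldots,P_k)$ is a $(k, \Omega_{b,d}(D^{d-\vardim}))$-partitioning tuple for $\S$.

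Third, I will describe a deterministic verification step that certifies whether the candidate tuple actually partitions $\S$ with the required bound. For each $S \in \S$, I restrict the polynomials $P_1, \ldots, P_k$ to $S$ and apply Proposition~\ref{pointLocationThm} to the restricted collection, obtaining a witness point in each semi-algebraically connected component of each realizable sign condition together with the associated sign vectors. This identifies precisely the set of sign conditions of $(P_1,\ldots,P_k)$ that $S$ meets. Since each $P_j$ has degree $O_d(D)$ and there are $k = O_d(\log D)$ of them, the work per set is $D^{O_b(d)}$ (details of handling the restriction are in~\cite{BB-12}), giving $O_b(|\S|\, D^{O(d)}) = O_{b,d}(|\S|\,\poly(D))$ in total. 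Tallying the number of sets per sign condition and comparing against the target bound $O_{b,d}(|\S| D^{\vardim-d})$ decides success.

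Finally, I wrap the sample-compute-verify pipeline in a repeat loop: if verification fails, we throw away the sample and try again. Since each trial succeeds independently with probability at least $1/2$, the expected number of trials is at most~$2$, and the total expected running time is
\[
  O_{b,d}\bigl(e^{\poly(D)} + |\S|\,\poly(D)\bigr),
\]
as claimed. The only non-routine point in the argument is ensuring that $\log m = O(\log D)$ so that the hypothesis $k \leq \log m$ of Theorem~\ref{efficientPartitioningFewVarieties} holds for our chosen sample size; this is why Lemma~\ref{epsSamplingPartitioningThm} is designed to give a sample of size polynomial in $D$ rather than, say, $\poly(k)$, and it is the key reason the whole scheme collapses to the stated complexity.
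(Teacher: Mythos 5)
Your proposal is correct and follows essentially the same sample--solve--verify scheme as the paper: draw a random sample of size $CD^C$, run Theorem~\ref{efficientPartitioningFewVarieties} on it, invoke Lemma~\ref{epsSamplingPartitioningThm} to transfer the partitioning guarantee to all of $\S$ with probability at least $1/2$, verify via Proposition~\ref{pointLocationThm} applied to the polynomials restricted to each $S\in\S$, and repeat on failure, giving expected time $O_{b,d}(e^{\poly(D)}+|\S|\poly(D))$. One tiny nit: for the hypothesis $k\le\log m$ of Theorem~\ref{efficientPartitioningFewVarieties} you need a \emph{lower} bound on $\log m$, which holds because $\log m\ge C\log D=Ck/d\ge k$ once $C\ge d$, not the upper bound $\log m=O(\log D)$ that you cite.
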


Let $\P = (P_1, \ldots, P_k)$ be the partitioning tuple obtained by Theorem~\ref{efficientlyComputePartition}, and let $P=\prod_{i=1}^k P_i$.  Let $\Omega$ be the set of connected components, also called \emph{cells}, of $\RR^d\setminus Z(P)$.
By 
Warren \cite[Theorem~2]{Warren-68}, 
the number of cells in $\Omega$ is $O(D^d)$, and a
semi-algebraic representation of each of these cells can be computed $D^{O(d^4)}$ time (cf. Proposition~\ref{arrangement}). Finally, for each semi-algebraic set $S\in\S$, we compute in $O(\poly(D))$ time the cells of $\Omega$ that it crosses. Hence, we compute in $O(n\poly(D))$ time the subset of $\S$ crossed by every cell of $\Omega$. 
We thus conclude:

\begin{theorem}
  \label{thm:efficientlyComputePartition}
  Let $\S$ be a collection of $n$ semi-algebraic sets in $\RR^d$, each of which has dimension at most $\vardim$ and complexity at most $b$. 
  Then for any $D \ge 1$, there is a non-zero polynomial $P$ of degree at most $D$, so that each cell of $\RR^d \setminus Z(P)$ crosses $O(n/D^{d-\vardim})$ elements of $\S$.
  The polynomial $P$ and the semi-algebraic representation of every cell of $\RR^d \setminus Z(P)$ along with the sets of $\S$ crossing it can be computed in $O(n\poly(D) + e^{\poly(D)})$ randomized expected time.
\end{theorem}

\section{Multilevel partitioning}
\label{sec:multi_level}

Recall that the polynomial produced by Theorem~\ref{thm:efficientlyComputePartition} partitions $\RR^d$ into a collection of open semi-algebraic cells plus a ``boundary'' $Z(P)$, so that each cell intersects few semi-algebraic sets from the pre-specified collection $\S$ but $V \coloneqq Z(P)$ may cross  many input sets. 
The main result of this section is an algorithm for computing a  multi-level partitioning 
that decomposes $V$ into smaller regions so that each region is crossed by few of the input 
semi-algebraic sets in $\S$ (Theorem~\ref{multiLevelPartitionBdry}). We also extend the algorithm to the setting in which we are given a set $\S$ of semi-algebraic sets and a set $\Q$ of points in $\RR^d$ and would like to obtain a partition of $\RR^d$ so that each cell contains few points of $\Q$ and crosses few sets of $\S$ (Corollary~\ref{cor:multiLevelPartitionBdry}).

\paragraph{Relative boundary and orthogonal projection.}
If $X$ and $Y$ are subsets of $\RR^d$ with $X\subset Y$, we define the \emph{relative boundary} of $X$ in $Y$, denoted by $\bd_Y X$, to be the set
\begin{equation}
  \label{relBoundary}
	\bd_Y X \coloneqq \{y\in Y \mid \forall\eps>0\, \exists x_1,x_2\in \RR^d : \|y-x_1\|,\|y-x_2\|\le \eps \wedge (x_1\in X) \wedge (x_2 \in Y\backslash X)\}.
\end{equation}

In general, $\bd_Y X$ might have the same dimension as $Y$. The next lemma shows that when $X$ and $Y$ are semi-algebraic sets, this cannot be the case.

\begin{lemma}
  \label{complexityOfRelativeBdry}
	Let $X\subset Y\subset\RR^d$ be semi-algebraic sets of complexity $b$ and $E$ respectively. Then $\bd_Y X$ is a semi-algebraic set of dimension at most $\min\{\dim(X),\dim(Y)-1\}$ and complexity $(bE)^{O(d)}$, and it can be computed in time $(bE)^{O(d)}$ time.
\end{lemma}
\begin{proof}
	The first part of the lemma is essentially Proposition 2.8.13 in Bochnak~\etal~\cite{BCR-98}. By applying  the singly exponential quantifier-elimination algorithm to the definition of $\bd_Y X$ in \eqref{relBoundary} (cf.\ Theorem~14.16 in \cite{BPR}), we construct in $(bE)^{O(d)}$ time a  quantifier-free formula of complexity $(bE)^{O(d)}$. 
\end{proof}

Next, we state a standard result in real algebraic geometry, which we will be using repeatedly:

\begin{lemma}
	\label{lemm:project}
	Let $S$ be a semi-algebraic set in $\RR^d$ of complexity $b$, and let $\pi: \RR^d \rightarrow \RR^k$ be a linear map for some $1 \le k < d$. Then $\pi(S)\subseteq \RR^k$ is a semi-algebraic set of complexity 
	$b^{O(k(d-k))}$ and can be computed in $b^{O(k(d-k))}$ time.
\end{lemma}
\begin{proof}
	Using a linear change of coordinates, we can assume that $\pi$ maps $\RR^d$ onto the hyperplane $x_{k+1} = \cdots = x_d=0$. The claim now follows from Proposition~\ref{BasuBlockElimination} with $h=d-k$, $\ell=k$, $s,t \le b$.
\end{proof}

We will also need the following sightly technical result from algebraic geometry.
\begin{lemma}
  \label{goodProjectionCodimOne}
Let $S\subset\RR^d$ be a semi-algebraic set with $\dim(S)<d$. Then there is a $(d-1)$-dimensional subspace $H\subset\RR^d$ so that each fiber of the orthogonal projection $\pi\colon S\to H$ has finite cardinality.  
\end{lemma}
\begin{proof}
This follows from standard techniques in algebraic geometry, so we just briefly sketch the proof here. Let $V = \overline S$ be the Zariski closure of $S$; we have $\dim(V) = \dim(S) < d$. Let $\hat{V}$ be the projectivization of $V$ (i.e., the smallest projective variety in $\RR\mathbf{P}^d$ that contains the image of $V$ under the embedding $(x_1,\ldots,x_d)\mapsto[1{:}x_1{:}\cdots{:}x_d]$).\footnote{This embedding maps a point in $\RR^{d}$ to a point in the projective space $\RR\mathbf{P}^d$, which is formed by a line in $\RR^{d+1}$ that contains the origin.} Then the intersection of $\hat V$ with the ``hyperplane at infinity'' $H_{\infty} \coloneqq \{[0{:}x_1{:}\cdots{:}x_d] : (x_1,\dots,x_d) \in \RR^d \}$ has dimension at most $d-1$. But if $\ell\subset V$ is a line, then the direction of $\ell$ is contained in $\hat V \cap H$. 
	In particular, there is a direction $v$ so that no line pointing in direction $v$ is contained in $V$. Thus if $\ell$ is a line pointing in direction $v$, we have $|\ell \cap S|\leq |\ell\cap V|$ is finite. Let $H$ be the orthogonal compliment of $v$. 
\end{proof}

We now describe a procedure for computing a ``good'' projection for a given algebraic variety, which is a key step of our algorithm. Let  $V\subset\RR^d$ be a variety, let $H$ be a subspace of $\RR^d$, and let $\pi\colon\RR^d\to H$ denote the orthogonal projection to $H$. We say that $H$ \emph{compresses} $V$ if there is an irreducible component $V^\prime\subset V$ so that $\dim(\pi(V^\prime))<\dim(V^\prime)$.

\begin{lemma}
  \label{goodOrthogonalProjection}
Let $\PP \coloneqq \{P_1,\ldots,P_k\}\subset\RR[x_1,\ldots,x_d]$ be a set of $k \le d$ nonzero polynomials of degree at most $E$, and let $V \coloneqq Z(\PP).$ Then there is a subspace $H\subset\RR^d$ of dimension $\dim(V)$ that does not compress $V$. If $d$ and $E$ are considered to be constant then such a choice of subspace can be computed in time $O(1)$.
\end{lemma}
\begin{proof}
	Let $d^\prime  \coloneqq  \dim(V)$ and let $G$ be the set of $d^\prime$-dimensional subspaces of $\RR^d$. A $d'$-dimensional subspace $H$ can be specified by a set of $d'$ vectors in $\RR^d$ that span $H$, so $H$ can be viewed as a point in $\RR^{d\times d'}$ and $G \subset \RR^{d\times d'}$. 
	Define
\begin{equation}
G_{\operatorname{good}} \coloneqq \{H \in G \colon \forall x \in \RR^d,\ |V \cap (H^\perp+x)|<\infty\}.
\end{equation}
Note that the requirement $\forall x \in \RR^d,\ |V \cap (H^\perp+x)|<\infty$ is precisely the statement that the fibers of the orthogonal projection $\pi\colon V\to H$ have finite cardinality.

First, we claim that if $H\in G_{\operatorname{good}}$ then $H$ does not compress $V$. Indeed, if $H$ compresses $V$, then at least one fiber of the orthogonal projection $\pi\colon V\to H$ must have positive dimension and thus infinite cardinality. 

Next, we claim that $G_{\operatorname{good}}$ is non-empty. Define $S_0  \coloneqq  V$ and $H_0  \coloneqq  \RR^d$. For each index $i=1,\ldots,d-d^\prime$, apply Lemma \ref{goodProjectionCodimOne} to find a $(d-i)$-dimensional subspace $H_i\subset H_{i-1}$ so that each fiber of the orthogonal projection $\pi_i\colon S_{i-1}\to H_i$ has finite cardinality. We define $S_{i} \coloneqq \pi_i(S_{i-1})$ (this is a semi-algebraic set of dimension at most $d^\prime$) and repeat. We define $H  \coloneqq  H_{d-d^\prime}$ and let $\pi$ be the orthogonal projection $\RR^d\to H$. Since $\pi = \pi_{d-d^\prime}\circ\cdots\circ \pi_1$, it immediately follows that each fiber of $\pi\colon V\to H$ has finite cardinality.

Finally, we describe a procedure for computing an element of $G_{\operatorname{good}}$. If $H\in G\backslash G_{\operatorname{good}}$, then as we have discussed above, there exists $x\in\RR^d$ so that $|V \cap (H^\perp+x)|=\infty$. By Lemma \ref{milnorThomResult}, if $|V \cap (H^\perp+x)|<\infty$ then $| V \cap (H^\perp+x)|\leq E(2E-1)^{d-d^\prime-1}$.
In view of this observation, $G\backslash G_{\operatorname{good}}$ can be expressed as follows.
To simplify notation in what follows, set $F \coloneqq E(2E-1)^{d-d^\prime-1}+1$. 
\begin{equation}
\begin{split}
	G\backslash G_{\operatorname{good}} = \{ (v_1,\ldots,&v_{d^\prime})\in \RR^{d\times d^\prime}\colon   \exists x,y_1,\ldots,y_F\in\RR^d\ \textrm{such that}\\
&v_i\neq 0,\ i=1,\ldots,d^\prime;\\
&v_i\cdot v_j = 0,\ 1\leq i < j \leq d^\prime;\\
&y_i \neq y_j,\ 1\leq i < j \leq F;\\
&(y_i - x)\cdot v_j = 0,\ i=1,\ldots,F,\ j =1,\ldots,d^\prime;\\
&P_i(y_j) = 0,\ i=1,\ldots,k,\ j = 1,\ldots,F
\}.
\end{split}
\end{equation}
The first two conditions guarantee that the vectors $v_1,\ldots,v_{d^\prime}\in\RR^d$ are non-zero and pairwise orthogonal, and thus $H  \coloneqq  \operatorname{span}(v_1,\ldots,v_{d^\prime})$ is a $d^\prime$-dimensional subspace of $\RR^d$. The third condition guarantees that the points $y_1,\ldots,y_F\in\RR^d$ are distinct. The fourth and fifth condition specifies that each of the points $y_i$ is contained in $Z(\PP)\cap (H^\perp+x)$, and thus $| V \cap (H^\perp+x)|\geq F,$ which implies $|V \cap (H^\perp+x)|=\infty$. 

	We now apply Proposition~\ref{BasuBlockElimination} to compute a quantifier-free formula for $G\backslash G_{\operatorname{good}}$, and thus of $G_{\operatorname{good}}$. Finally, we use Proposition~\ref{arrangement} to compute a point $(v_1,\ldots,v_{d^\prime})\in G_{\operatorname{good}}$. We define $H \coloneqq \operatorname{span}(v_1,\ldots,v_{d^\prime})$.  By Propositions~\ref{BasuBlockElimination} and~\ref{arrangement}, if $E$ and $d$ are constants, it takes $O(1)$ time to compute $v$.
\end{proof}

\paragraph{One stage of the partitioning scheme.}
We now describe a single stage of our multi-level partitioning scheme.
  Let $\PP \coloneqq \{P_1,\ldots,P_k\}\subset\RR[x_1,\ldots,x_d]$ be a set of $k\le d$ nonzero polynomials of degree at most $E$, let $V \coloneqq Z(\PP)$, and let $d^\prime \coloneqq \dim(V)$. Let $\S$ be a multiset consisting of $n$ semi-algebraic sets in $\RR^d$, each of complexity at most $b$. For a given parameter $D>1$, the goal is to  construct a polynomial
  $P\in\RR[x_1,\ldots,x_d]$ of degree at most $D$ so that each connected component of $V\backslash Z(P)$ is crossed by $O(n/D)$ sets of $\S$. The algorithm performs the following steps to construct $P$:

  \begin{enumerate}
	  \item Using Lemma~\ref{goodOrthogonalProjection}, we compute a subspace $H$ of dimension $d'$ and an 
		  orthogonal projection $\pi: \RR^d \rightarrow H$ such that $\dim \pi(V') = \dim V'$ for every irreducible component of $V$.
	  \item For each set $S\in\S$, let $\bd S_V  \coloneqq  \bd_V (S\cap V)$ be the relative boundary of $S\cap V$ in  $V$. By Lemma~\ref{complexityOfRelativeBdry}, $\bd S_V$ has complexity $(bE)^{O(d)}$ and $\dim \bd S_V \le d'-1$.
		  Set $\bd \S_V \coloneqq \{ \bd S_V \mid S \in \S\}$.
	  \item For each set $S\in \S$, let $S^\downarrow  \coloneqq  \pi(\bd S_V)$ be the projection of $\bd S_V$ onto the $d'$-dimensional subspace $H$; by Lemma~\ref{lemm:project}, 
		  $S^\downarrow$ has complexity $(bE)^{O(d^3)}$ and $\dim S^\downarrow \le d'-1$.
	  \item  For  simplicity, we identify the subspace $H$ with $\RR^{d'}$. 
		  Using Theorem~\ref{thm:efficientlyComputePartition}, we construct a $d'$-variate partitioning polynomial $P^\downarrow$ of degree at most $D$ so that each cell of $\RR^{d'}\backslash Z(P^\downarrow)$ crosses 
		  at most $(bE)^{d^{O(1)}}n/D$ sets of $\S^\downarrow$. Let $\Omega^\downarrow$ be the set of resulting cells in $\RR^{d'}\backslash Z(P^\downarrow)$. 
		 The algorithm also computes a semi-algebraic representation of each cell $\omega^\downarrow \in \Omega^\downarrow$ as well as the subset $S_\omega^\downarrow$ of $\S^\downarrow$ intersecting $\omega^\downarrow$.
	 \item Let $P$ be the pullback of $P^\downarrow$ in $\RR^d$, i.e., $P(x) = P^\downarrow (\pi(x))$; $\deg P = \deg P^\downarrow \le D$. Using Proposition~\ref{arrangement}, we construct the cells of $V\setminus Z(P)$, i.e., a semi-algebraic representation of each cell and a reference point within each cell. 
		  Let $\Omega$ be the set of resulting cells. We note that the cells in $\Omega$ are stacked over the cells of $\Omega^\downarrow$ in the following sense: for each cell $\omega^\downarrow \in \Omega^\downarrow$, the connected components of $\pi^{-1}(\omega^\downarrow)\cap V$ are the cells of $\Omega$. Recall that the algorithm computes the subset $\S^\downarrow_{\omega^\downarrow} \subseteq \S^\downarrow$ of semi-algebraic sets that intersect $\omega^\downarrow$. For each cell $\omega\in \Omega$ such that $\pi(\omega) = \omega^\downarrow$, we compute the subset $\S_\omega \subseteq \{ S \mid S^\downarrow \in \S^\downarrow_{\omega^\downarrow}\}$ of elements that cross $\omega$.
  \end{enumerate}
  This completes the description of the algorithm for a single level of the partitioning scheme. We now prove the correctness of the algorithm and analyze its performance.

  \begin{lemma}
	  \begin{itemize}
		\item[(i)] $|\Omega| = O(E^{d-d'}D^{d'})$, and the complexity of each cell is $(ED)^{O(d^4)}$.
		\item[(ii)] Each cell of $\Omega$ is crossed by at most $(bE)^{d^{O(1)}} n/D$ elements of $\S$.
	  \end{itemize}
  \end{lemma}
  \begin{proof}
	  The bounds on the size of $\Omega$ follows from Proposition~\ref{BB-bound} and the complexity of each cell follows from Proposition~\ref{arrangement}, so it suffices to prove (ii).

	  Let $\omega$ be a cell in $\Omega$, and let $\omega^\downarrow  \coloneqq  \pi(\omega)$. Then $\omega^\downarrow$ is a cell in $\Omega^\downarrow$. Recall that $\omega^\downarrow$ intersects $(bE)^{d^{O(1)}}n/D$ sets of $\S^\downarrow$. Hence, $\omega \subset \pi^{-1} (\omega^\downarrow)$ intersects at most $(bE)^{d^{O(1)})}n/D$ sets of $\bd \S_V$. We now claim that if a set $S\in \S$ crosses $\omega$, then $\bd S_V$ intersects $\omega$.

 If $S\in\S$ crosses $\omega$, then there are points $p,q\in \omega$ with $p\in S$ and $q\not\in S$. Since $\omega$ is connected and semi-algebraic, there is a curve $\gamma\subset \omega$ with endpoints $p$ and $q$. Write $\gamma=(\gamma\cap S) \sqcup(\gamma\backslash S)$; both of these sets are non-empty, and since $\gamma$ is connected, $\overline{(\gamma\cap S)} \cap \overline{(\gamma\backslash S)}$ is non-empty. But $\overline{(\gamma\cap S)} \cap \overline{(\gamma\backslash S)}\subset \bd S_V\cap \omega$, so $\bd S_V$ intersects $\omega$. This completes the proof of the lemma.
  \end{proof}

  Assuming $E$ and $b$ are constants, Steps~(1)--(3) take $O(n)$ time, and Step (4) takes $O(n\poly(D) + e^{\poly(D)})$ expected time. As for Step~5, computing $\Omega$ takes $O(\poly(D))$ time, and it takes $O(\poly(D))$ time to determine whether a set $S\in\S$ crosses a cell $\omega\in\Omega$. Hence, Step~5 takes a total of 
  $O(n\poly(D))$ time. Putting everything together, we obtain the main result of this subsection.

\begin{lemma}
  \label{partitioningOnAVariety}
	Let $\PP  \coloneqq \{P_1,\ldots,P_k\}\subset\RR[x_1,\ldots,x_d]$ be a set of $k \le d$ nonzero polynomials of degree at most $E$, let $V \coloneqq Z(\PP),$ and let $d^\prime \coloneqq \dim(V)$. Let $\S$ be a multiset consisting of $n$ semi-algebraic sets in $\RR^d$, each of complexity at most $b$.

	Then for each $D>1$, there is a polynomial $P\in\RR[x_1,\ldots,x_d]$ of degree at most $D$ so that $V\cap Z(P)$ has dimension at most $d^\prime-1$; $V\backslash Z(P)$ is partitioned into a set $\Omega$ of $O(E^{d-d'}D^{d^\prime})$ connected semi-algebraic cells, each of complexity $(ED)^{O(d^4)}$, so that each cell of $\Omega$ is crossed by  at most $(bE)^{d^{O(1)}} n/D$ sets of $\S$. Assuming $E$ and $b$ are constants, the polynomial $P$, a semi-algebraic representation of the cells in $\Omega$,  and the elements of $S$ crossing each cell of $\Omega$ can be computed in $O(n\poly(D) + e^{\poly(D)})$ randomized expected time.
\end{lemma}

\paragraph{Computing the multi-level partition.}
We are now ready to describe the overall multi-level partitioning scheme that given $\S$ and a constant $\eps>0$ partitions $\RR^d$ into a family $\Omega_0, \ldots, \Omega_d$, where each $\Omega_i$ is a collection of (open) semi-algebraic cells of constant complexity so that any cell $\omega\in\Omega_i$ is crossed by at most $\tfrac{n}{4|\Omega_i|^{1/d-\eps}}$ elements of $\S$; see Theorem~\ref{multiLevelPartitionBdry} for a more precise statement. We construct this partition by 
invoking the single-stage algorithm described above at most $d$ times, as follows.

We select a sequence of constants $D_0,D_1,\ldots,D_{d}$, where $D_0$ depends only on $b,d,$ and $\eps$, and $D_i$ depends on $b,d,\eps,$ and $D_{i-1}$. Thus each of $D_0,\ldots,D_d$ ultimately depends only on $b,d,$ and $\eps$. 
For each index $i\le d$, there will also be a constant $C_i>0$ that depends only on $b,d,$ and $D_{i-1}$  (and not on $\eps$); $C_0$ depends only on $b$ and $d$. Then $D_i$ will be selected sufficiently large so that 
\begin{equation}
  \label{DiVsCi}
  C_i\leq \frac14 D_i^{\eps/4}.
\end{equation}

 At the beginning of the $i$th stage, for $0 \le i \le d$, we have a set of polynomials $\PP^{(i)}=\{P_0,\ldots,P_{i-1}\}$, where $\deg(P_j)\le D_j$ for all $0 \le j < i$, and we have constructed the  family $\Omega_0,\ldots,\Omega_{i-1}$ that partition 
 $\RR^d\backslash Z(\PP^{(i)})$, i.e., we have computed a semi-algebraic representation of each cell $\omega$ in $\Omega_j$ along with a reference point in $\omega$ and the subset $\S_\omega\subseteq\S$ of elements that cross $\omega$. The algorithm ensures that 
 $|\S_\omega| \le \tfrac{n}{4|\Omega_j|^{1/d-\eps}}$
 for every $\omega\in\Omega_j$.
 For $i=0$, $\PP^{(0)}=\emptyset$, we assume $Z(\PP^{(0)})$ to be $\RR^d$, and no semilagebraic cells have been constructed so far.

 In the $i$th stage, we perform the following steps. Let $V^{(i)}  \coloneqq   Z(\PP^{(i)})$. By construction, $\dim V^{(i)} \le d-i$. If $\dim V^{(i)}=0$, i.e., $V^{(i)}$ consists of a finite set of points, we set $\Omega_i \coloneqq V^{(i)}$ and the algorithm stops. So assume that $\dim V^{(i)}>0$. 
 We choose the constant $D_i$ as mentioned above and apply the above single-stage algorithm with 
 $\P^{(i)}$, $\S$, and $D_i$ as the input (note that $E=D_{i-1}$ in our setting). The algorithm returns a polynomial $P_i$ of degree at most $D_i$, so that $\dim(V^{(i)}\cap Z(P_i)) \le \dim V^{(i)}-1$, and a collection $\Omega_i$ of semi-algebraic cells that partition $V^{(i)}\backslash Z(P_i)$. For  each cell $\omega\in \Omega_i$, it also returns the subset $\S_\omega$ of elements in $\S$ that cross $\omega$. 
 We set $\PP^{(i+1)} \coloneqq \PP^{(i)}\cup \{P_i\}$ and proceed to the $(i+1)$st stage of the algorithm.

 If the algorithm terminates with $i < d$, we set $\Omega_{i+1}, \ldots, \Omega_d \coloneqq \emptyset$. This completes the description of the algorithm. We now analyze the performance of the algorithm.

\begin{theorem}
  \label{multiLevelPartitionBdry}
	Let $\S$ be a multiset consisting of $n$ semi-algebraic sets in $\RR^d$, each of complexity at most $b$, and
	let $\eps \in (0, min\{1/4, 1/d\})$ be a constant.
	Then there are collections $\Omega_0,\ldots,\Omega_{d}$ of semi-algebraic sets with the following properties.
  \begin{itemize}
	  \item[(i)] For each index $i$, each cell $\omega\in\Omega_i$ is a connected semi-algebraic set of complexity $O(1)$.
	  \item[(ii)] The cells partition $\RR^d$, in the sense that
    \begin{equation}\label{cellsPartitionRd}
      \RR^d=\bigsqcup_{i=0}^d \bigsqcup_{\omega\in\Omega_i}\omega.
    \end{equation}
  \item[(iii)] For each index $i$ and each $\omega\in\Omega_i$, let  $\S_\omega$ be the subset of sets in $\S$ that cross $\omega$. Then
	  $$|\S_\omega| \le \frac{n}{4|\Omega_i|^{1/d-\eps}}.$$
  \end{itemize}
  The sets in $\Omega_0,\ldots,\Omega_d$ can be computed in $O(n)$ expected time by a randomized algorithm. For each $i \le d$ and for each $\omega\in\Omega_i$, the algorithm specifies a semi-algebraic description of $\omega$, a ``reference point'' inside $\omega$, and the subset $\S_\omega$.
  The implied constants in the big-$O$ notation depend on $d$, $b$, and $\eps$.
\end{theorem}

\begin{proof}
	Part (i) follows from Lemma~\ref{partitioningOnAVariety} and the fact that a point is a connected semi-algebraic set. As for (ii), since $\dim V^{(i)}  \le d-i$, i.e., the dimension of $V^{(i)}$ goes down by at least $1$ at each stage, the algorithm terminates within $d+1$ stages and thus constructs a partition of $\RR^d$. 

	We now prove (iii). If $\dim V^{(i)}=0$, then no input set crosses any cell of $\Omega_i$, as each of which is a point, so (iii) obviously holds. Next, assume that $\dim V^{(i)} > 0$. In this case $\Omega_i \ne \emptyset$.
	By Lemma~\ref{partitioningOnAVariety}, there is a constant that depends only on $b, d$, and $D_{i-1}$ (but not on $\eps$) such that  $|\Omega_i|\leq C_iD_i^d$ and that each cell in $\Omega_i$ is crossed by at most $C_in/D_i$ sets of $\S$. Assuming $D_i$ is chosen sufficiently large so that \eqref{DiVsCi} holds, we have
\begin{equation}
  \label{DiVsOmega}
  \frac{1}{D_i^{d+\eps/4}}\leq \frac{1}{4C_iD_i^{d}} \leq \frac{1}{|\Omega_i|},
\end{equation}
and 
\begin{equation}
  \label{SVsOmega}
  \frac{C_in}{D_i}\leq \frac{n}{4D_i^{1-\eps/4}}.
\end{equation}
Combining \eqref{DiVsOmega} and \eqref{SVsOmega}, we obtain that
\begin{equation}
\frac{C_in}{D_i}\leq \frac{n}{4D_i^{1-\eps/4}}\leq \frac{n}{4|\Omega_i|^{\frac{1-\eps/4}{d+\eps/4}}} \leq \frac{n}{4|\Omega_i|^{1/d-\eps}},
\end{equation}
where in the last inequality we used the fact that $|\Omega_i|\geq 1$ and $\frac{1-\eps/4}{d+\eps/4}\geq\frac{1}{d}-\eps$ when $d\geq 1$ and $0<\eps<\min\{1/4, 1/d\}$. We conclude that if $\Omega_i$ is non-empty, then at most $\frac{n}{4|\Omega_i|^{1/d-\eps}}$ sets from $\S$ cross each cell of $\Omega_i$. The statement is also vacuously true if $\Omega_i$ is empty.

	Since $b, d, \eps, D_i$ are all constants, each application of Lemma~\ref{partitioningOnAVariety} takes $O(n)$ randomized expected time. Hence, the overall expected running time of the algorithm is also $O(n)$.
\end{proof}

\noindent\textbf{Remark.}
  In \cite{MP}, Matou\v{s}ek  and Pat\'akov\'a established a multi-level partitioning scheme in which the degrees $D_i$ of partitioning polynomials at each stage could be a function of $n$. Such a choice of $D_i$ will not work in the above proof, because the condition \eqref{DiVsCi} forces $D_i$ to be much larger than $D_{i-1}$, and thus the value of $D_i$  increases rapidly at each stage. In contrast,  since  Matou\v{s}ek  and Pat\'akov\'a are working with a point set, and the projection of a point set remains a point set, they do not face this problem.

\paragraph{Multi-level partitioning for semi-algebraic sets and points.}

We conclude this section by referring to the case where in addition to a set $\S$ of $n$ semi-algebraic sets in $\RR^d$,
we are also given a set $\Q$ of $m$ points in $\RR^d$, which we would like to partition simultaneously.
This scenario is not used directly by the applications presented in Section~\ref{sec:applications}, however, we believe this result is of independent interest.

In such a setting, we can adapt the multi-level polynomial partitioning scheme described above 
so that it also partitions the points.
Indeed, we can easily modify the single-stage algorithm to require that $O(m/D^{d^\prime})$ points from $\Q$ are contained in each cell, by applying the projection $\pi$ to the points in $\Q \cap V$ (where $V=Z(\PP)$ is the input variety), and then invoke Theorem~\ref{thm:efficientlyComputePartition} on both $\S^\downarrow$ and $\pi(\Q \cap V)$. Then Theorem~\ref{multiLevelPartitionBdry} would imply that we also have that, for each $i = 0, \ldots, d$ and $\omega \in \Omega_i$, at most $\frac{m}{4|\Omega_i|^{1-\eps}}$ points from $\Q$ are contained in $\omega$. This follows from a simple calculation similar to the one applied in the proof of Theorem~\ref{multiLevelPartitionBdry}. 
That is, each cell of $\Omega_i$ contains at most $C_im/D^d_i$ points of $\Q$, and from \eqref{DiVsCi} and the fact that $|\Omega_i|\leq C_iD^d_i$, we conclude that $|\Omega_i| < D_i^{d+\eps/4}$,
which implies
$$
\frac{C_im}{D_i^d} \le \frac{m}{4D_i^{d-\eps/4}} \le \frac{m}{4|\Omega_i|^{\frac{d-\eps/4}{d+\eps/4}}}
\le \frac{m}{4|\Omega_i|^{1-\eps}} ,
$$
as asserted.
We thus conclude the following:

\begin{cor}
  \label{cor:multiLevelPartitionBdry}
	Let $\S$ be a multiset consisting of $n$ semi-algebraic sets in $\RR^d$, each of complexity at most $b$, let $\Q$ be a multiset of $m$ points in $\RR^d$, and let $0<\eps<\min\{1/4, 1/d\}$ be a constant.
	Then there are collections $\Omega_0,\ldots,\Omega_{d}$ of semi-algebraic sets with the following properties.
  \begin{itemize}
  \item For each index $i$, each cell $\omega\in\Omega_i$ is a connected semi-algebraic set of complexity $O(1)$.
  \item For each index $i$ and each $\omega\in\Omega_i$, at most $\tfrac{n}{4|\Omega_i|^{1/d-\eps}}$ sets from $\S$ cross $\omega$, and at most $\tfrac{m}{4|\Omega_i|^{1-\eps}}$ points from $\Q$ are contained in $\omega$.
  \item The cells partition $\RR^d$, in the sense that
    \begin{equation}
      \RR^d=\bigsqcup_{i=0}^d \bigsqcup_{\omega\in\Omega_i}\omega.
    \end{equation}
  \end{itemize}
  The sets in $\Omega_0,\ldots,\Omega_d$ can be computed in $O(n + m)$ expected time by a randomized algorithm. 
	For  each $i$ and for every $\omega\in\Omega_i$, the algorithm returns a semi-algebraic representation of $\omega$, a ``reference point'' inside $\omega$, the subset of elements of $\S$ that cross $\omega$, and the subset of points of $\Q$ that are contained in $\omega$. 
\end{cor}

\section{Applications}
\label{sec:applications}

\subsection{Point-enclosure queries}
\label{sec:point-location}

Let $\S$ be a set of $n$ semi-algebraic sets in $\RR^d$, each of complexity at most $b$. Each set $S$ is assigned a weight $w(S)$. We assume that the weights belong to a semigroup, i.e., subtractions are not allowed, and that the semigroup operation can be performed in constant time.
We wish to preprocess $\S$ into a data structure so that the cumulative weight of the 
sets in $\S$ that contain a query point can be computed in $O(\log n)$ time; we refer to this query as \emph{point-enclosure} query.
Note that if the weight of each set is $1$ and the semi-group operation is Boolean $\vee$, then the point-enclosure query becomes a \emph{union-membership query}: determine whether the query point lies in $\bigcup\S$.
Another useful special case is when the weights are all~$1$ and the semi-group operation is integer addition, so that the query counts the number of sets of $\S$ covering it.

We follow a standard hierarchical partitioning scheme of space, e.g., as in \cite{CEGS-91,Ag-survey}, but use Theorem~\ref{multiLevelPartitionBdry} at each stage. 
Using this hierarchical partition, for a given constant $\delta>0$, we construct  a tree data structure $\T$ of depth $O(\log n)$ and size $O(n^{d+\delta})$,
and a query is answered by following a path in $\T$. 

\paragraph{The data structure.}
Set $\eps \coloneqq \delta/2d^2$  and we choose $n_0 \coloneqq n_0(\eps,d,b)$ to be a sufficiently large constant.
If $n\le n_0$ then $\T$ consists of a single node that stores $\S$ itself. So assume that $n > n_0$.  
Applying Theorem \ref{multiLevelPartitionBdry} to $\S$, we construct a collection $\Omega_0,\ldots,\Omega_d$, each a family of semi-algebraic sets, that together partition $\RR^d$. For each index $i$, each cell $\omega\in\Omega_i$ is crossed by at most $|\S|/(4|\Omega_i|^{1/d-\eps})$ sets of $\S$. For each index $i$ and each $\omega\in\Omega_i$, let $\S_{\omega}\subset S$ be the family of sets that cross $\omega$, and let $\S_\omega^* \subseteq \S$ be the family of sets that contain $\omega$; set $n_\omega  \coloneqq  |\S_\omega|$. 

The set $\S_\omega$ is specified by Theorem \ref{multiLevelPartitionBdry}. The set $\S^*_\omega$ consists precisely of those sets $S\in\S\backslash \S_{\omega}$ that contain the reference point of $\omega$; this set can be computed in $O(|\S|)$ time. We compute the weight $W_\omega   \coloneqq  w(\S^*_\omega)$ by applying the semigroup operation to the weights $\{w(S)\colon S\in\S^*_{\omega}\}$. This can be done in $O(n)$ time. 

We create the root $v$ of $\T$ and store the sets $\Omega_v  \coloneqq  \Omega_0 \cup \cdots\cup\Omega_d$ at $v$. 
For each $\omega\in\Omega_v$, 
we create a child $z_\omega$ and store $\omega$ and $W_\omega$ at $z_\omega$. We recursively construct the data structure for each $\S_\omega$ and attach it to $z_\omega$ as its subtree.
The total size of $v$ is $O(\sum_i |\Omega_i|)$, which is bounded by a constant $c_1 \coloneqq c_1(\eps,d,b)$.

Since $|\S_\omega| \le n/4$ and the recursion stops when the input size becomes at most $n_0$, the depth of $\T$ is $O(\log n)$. To bound the size of the data structure, let $\sigma(n)$ denote the maximum size of the data structure constructed on an input of size $n$. Then we obtain the following recurrence for $\sigma(n)$:
\begin{equation}
	\label{eq:space}
	\sigma (n) \le \left\{ \begin{array}{ll}
		n & \mbox{for $n \le n_0$,}\\[1mm]
		\displaystyle c_1 + \sum_{i=0}^d\sum_{\omega\in\Omega_i}\sigma(n_\omega)&\mbox{for $n>n_0$}.
	\end{array} \right. 
\end{equation}
We claim that the solution to the above recurrence is $\sigma(n) \le  n^{d+\delta}$.
The claim follows immediately for $n \le n_0$. So  using induction hypothesis and (\ref{eq:space}), we obtain
\begin{equation}
	\label{eq:size-recur}
\begin{split}
	\sigma(n) & \leq c_1 + \sum_{i=0}^d\sum_{\omega\in\Omega_i}\sigma(n_\omega)\\
& \leq c_1 + \sum_{i=0}^d \sum_{\omega\in\Omega_i} n_\omega^{d+\delta}\\
& \leq c_1 +  \sum_{i=0}^d \sum_{\omega\in\Omega_i}\left(\frac{n}{4|\Omega_i|^{1/d-\eps}}\right)^{d+\delta}\\
	& \leq c_1 + 4^{-d}n^{d+\delta}\sum_{i=0}^d \sum_{\omega\in\Omega_i}|\Omega_i|^{(d+\delta)(-1/d+\delta/2d^2)}\\ 
	& \leq c_1 + 4^{-d}n^{d+\delta}\sum_{i=0}^d \sum_{\omega\in\Omega_i}|\Omega_i|^{-1}\\ 
	& \leq \left(\frac{c_1}{n^{d+\delta}} + \frac{d+1}{4^d}\right ) n^{d+\delta}\\
& \leq n^{d+\delta}.
\end{split}
\end{equation}
The last inequality follows because $d \ge 1$, $n > n_0$, and $n_0$ is chosen sufficiently large (in particular $n_0 \ge 2c_1$).
An analogous argument shows that the expected preprocessing time is also $O(n^{d+\delta})$.

\paragraph{Query procedure.}
Given a query point $q\in \RR^d$, we compute the cumulative weight of the sets containing $q$ by traversing a path in the tree in a top-down manner: We start from the root and maintain a partial weight $W$, which is initially set to $0$. Suppose we are at  a node $v$. If $v$ is a leaf, we scan the list $\S_v$ stored at $v$. If $S\in \S_v$ contains $q$, then we add $w(S)$ to the partial weight $W$. If $v$ is an interior node, then we add the weight $W_v$ stored at $v$ to  $W$. Let $\Omega_v$ be the partition constructed at $v$. We find the cell $\omega\in\Omega_v$ containing $v$ and recursively visit the child $z_\omega$ of $v$.
The total query time is $O(\log n)$, where the constant of proportionality depends on $\eps$.
Putting everything together, we obtain the following:

\begin{theorem}
  \label{theo:point-location}
  Let $\S$ be a set of $n$ semi-algebraic sets in $\RR^d$, each of complexity at most~$b$ for some constant $b>0$, and let $w(S)$ be the weight of each set $S\in\S$ that belongs to a semigroup.  Let $\eps>0$ be a constant.
  Assuming that the semigroup operation can be performed in constant time, $\S$~can be preprocessed in $O(n^{d+\eps})$ randomized expected time into a data structure of size $O(n^{d+\eps})$  so that the cumulative weight of the sets that contain a query point can be computed in $O(\log n)$ time.
\end{theorem}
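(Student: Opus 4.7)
The plan is to formalize the hierarchical construction sketched above and verify the size, preprocessing, and query bounds by induction on $n$. Fix constants $D=D(b,d,\eps)$ and $n_0=n_0(b,d,\eps)$, with $D$ chosen so that the recurrence below closes and $n_0$ large enough to absorb the additive $e^{\poly(D)}$ term at the recursive step. If $\abs{\S}\leq n_0$, the node is a leaf storing $\S$ explicitly. Otherwise, I apply Theorem~\ref{efficientlyComputePartition} to the family of boundaries $\{\partial S \mid S\in\S\}$; each boundary is semi-algebraic of complexity $O_b(1)$ and dimension at most $\vardim=d-1$, so the theorem produces, in expected time $O_{b,d}(n\,\poly(D)+e^{\poly(D)})$, a tuple $\P=(P_1,\ldots,P_k)$ whose $O_d(D^d)$ realizable strict sign conditions each have a realization meeting at most $O_{b,d}(n/D^{d-\vardim})=O_{b,d}(n/D)$ boundaries.

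For each realizable sign condition $\sigma$ I must compute the boundary-crossing subfamily $\S_\sigma$ and the weight $W_\sigma=w(\S^*_\sigma)$ of the fully-containing sets. The structural fact used is that a connected component of the realization of $\sigma$ that is disjoint from $\partial S$ is either entirely contained in $S$ or disjoint from $S$. Thus, after one call to Proposition~\ref{pointLocationThm} on $\P$ to produce a representative point in each connected component of each realization, I can decide membership in $\S^*_\sigma$ by a single $O_b(1)$ containment test per representative, once the components meeting $\partial S$ are marked. The marking is done by restricting $P_1,\ldots,P_k$ to $\partial S$ and re-applying Proposition~\ref{pointLocationThm}, at a cost of $O_b(D^{O(d)})$ per set and $O_b(nD^{O(d)})$ in total. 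I then recurse on each $\S_\sigma$ (of size at most $O_{b,d}(n/D)$) and attach the resulting subtree at the child $z_\sigma$, storing $\P$ and the list of $W_\sigma$'s at the current node.

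The size and preprocessing-time recurrence
\[
T(n)\leq O_d(D^d)\,T\bigl(O_{b,d}(n/D)\bigr)+O_{b,d}\bigl(n\,\poly(D)+e^{\poly(D)}\bigr)
\]
unrolls to $T(n)=O_{b,d}(n^{d+\eps})$: choose $D$ large enough that the branching factor $O_d(D^d)$ is dominated by $D^{d+\eps}$, and $n_0$ large enough for the base case to swallow the additive $e^{\poly(D)}$. The depth of $\T$ is $O_{b,d,\eps}(\log n)$, and a query descends a single root-to-leaf path: at each internal node, the signs of $P_1(q),\ldots,P_k(q)$ identify the correct child $z_\sigma$ in $O_{b,d}(1)$ time, and $W_\sigma$ is added to the running total; the leaf contributes $O_{b,d}(1)$ by brute force. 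This gives the claimed $O_{b,d,\eps}(\log n)$ query time. The only genuinely delicate point — and the reason one does not apply the partition theorem to $\S$ itself — is that the partition must be computed on the \emph{boundaries}, which have dimension $d-1$; this replaces the exponent $d-\vardim$ by $1$ and yields the crucial $n/D$ shrinkage per level that drives both the logarithmic depth and the $O(n^{d+\eps})$ size bound. Applying Theorem~\ref{efficientlyComputePartition} to $\S$ directly (with $\vardim=d$) would give no shrinkage and collapse the recursion.
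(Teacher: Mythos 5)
Your proposal matches the paper's proof essentially step for step: the same hierarchical partition tree built by applying Theorem~\ref{efficientlyComputePartition} so that each realization meets at most $O(n/D)$ boundaries, the same use of Proposition~\ref{pointLocationThm} (with restriction to $\partial S$) to compute $\S_\sigma$, $\S^*_\sigma$, and $W_\sigma$, the same recurrence giving $O(n^{d+\eps})$ size and preprocessing time, and the same root-to-leaf query. Your closing observation that the partition must be taken with respect to the boundaries (dimension $d-1$, hence the crucial $n/D$ shrinkage) is exactly what the paper does, just stated more explicitly.
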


\subsection{Range searching}
\label{sec:range_search}
Next, we consider range searching with semi-algebraic sets: Let $P$ be a set of $n$ points in $\RR^d$. Each point $p\in P$ is assigned a weight $w(p)$ that belongs to a semigroup. Again we assume that the semigroup operation takes constant time.
We wish to preprocess $P$ so that, for a query range~$\gamma$, represented as a semi-algebraic set in $\RR^d$, the cumulative weight of $\gamma\cap P$ can be computed in~$O(\log n)$~time. Here we assume that the query ranges (semi-algebraic sets) are parameterized as described in Section~\ref{sec:semi-para}. That is, 
we have a fixed $b$-variate Boolean function $G$. A query range is represented as a point $x\in \XX=\RR^t$,
 for some $t\le \binom{b+d}{d}^b$, and the underlying semi-algebraic set is~$Z_{x,G}$. We refer to $t$ as the \emph{dimension of the query space}, and  to the range searching problem in which all query ranges are of the form $Z_{x,G}$ as \emph{$(G,t)$-semi-algebraic range searching}.

 For a point $p \in \RR^d$, let $S_p \subseteq \XX$ denote the set of semi-algebraic sets $Z_{x,G}$ that contain~$p$, i.e., $S_p=\{x\in\XX \mid p \in Z_{x,G}\}$. It can be checked that $S_p$ is a semi-algebraic set whose complexity depends only on $b, d$, and $G$.
 Let $\S \coloneqq \{S_p \mid p \in P\}$. For a query range $Z_{x,G}$, we now 
 wish to compute the cumulative weight of the sets in $\S$ that contain $x$. This can be done using Theorem~\ref{theo:point-location}. Putting everything together, we obtain the following:
\begin{theorem}
  \label{thm:range-search}
  Let $P$ be a set of $n$ points in $\RR^d$, let $w(p)$ be the weight of $p\in P$ that belongs to a 
  semigroup, and let $G$ be a fixed $b$-variate Boolean function for some constant $b>0$.
        Let $t\le \binom{b+d}{d}^b$ be the dimension of the query space.
  Assuming that the semigroup operation can be performed in constant time,
  $P$ can be preprocessed in $O(n^{t+\eps})$ randomized expected time into a data structure of size $O(n^{t+\eps})$,
        for any constant $\eps>0$, so that a $(G,t)$-semi-algebraic range query can be answered in $O(\log n)$ time.
\end{theorem}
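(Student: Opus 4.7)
The plan is a direct reduction to the point-location data structure of Theorem~\ref{theo:point-location}, using the point-range duality that swaps the roles of the input points and the query ranges. Concretely, for each $p \in P$ I would map $p$ to the subset $S_p = \{x \in \XX \mid p \in Z_{x,G}\}$ of the parameter space $\XX = \RR^t$, and assign it the weight $w(p)$. Letting $\S = \{S_p \mid p \in P\}$, the defining equivalence $x \in S_p \iff p \in Z_{x,G}$ turns the quantity $\sum_{p \in P \cap Z_{x,G}} w(p)$ into the cumulative weight of those sets in $\S$ that contain the point $x \in \XX$, which is precisely what Theorem~\ref{theo:point-location} is designed to compute.

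The main step is to verify that each $S_p$ is semi-algebraic of complexity bounded by a constant depending only on $b$, $d$, and $G$. Here I would exploit the observation that a point $x = (q_1,\ldots,q_b) \in \XX$ encodes the tuple of polynomials $(P_{q_1},\ldots,P_{q_b})$, and by Remark~\ref{specialPoly} each evaluation $P_{q_i}(p) = \sum_j q_{i,j} H_j(p)$ is \emph{linear} in the coefficient vector $q_i$, with coefficients determined by the monomial values $H_j(p)$. Consequently, the condition $p \in Z_{x,G}$ is a Boolean combination, specified by $G$, of at most $b$ linear inequalities in the coordinates of $x$, so $S_p$ is semi-algebraic of complexity $O_{b,d}(1)$ uniformly in $p$.

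Once this uniform complexity bound is in hand, the conclusion of the theorem follows by invoking Theorem~\ref{theo:point-location} on the collection $\S$ in ambient dimension $t$: the size and preprocessing bounds $O_{b,d}(n^{t+\eps})$ and the query bound $O_{b,d,\eps}(\log n)$ transfer directly, and a single point-location query at $x$ answers the range query at $Z_{x,G}$. The only point of care is confirming that the implicit constants in Theorem~\ref{theo:point-location} depend only on the complexity parameter of the sets in $\S$ (hence only on $b$, $d$, and $G$) and on the ambient dimension, and not on anything else; since that theorem is stated for arbitrary ambient dimension and bounded semi-algebraic complexity, this poses no real obstacle, and the complexity bound for $S_p$ is the only nontrivial ingredient of the argument.
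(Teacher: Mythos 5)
Your proposal is correct and follows essentially the same route as the paper: the paper also defines $S_p=\{x\in\XX \mid p\in Z_{x,G}\}$, observes that each $S_p$ is a semi-algebraic set of complexity depending only on $b$, $d$, and $G$, and reduces the range query to a point-location query on $\S=\{S_p \mid p\in P\}$ in $\RR^t$ via Theorem~\ref{theo:point-location}. Your explicit justification that $P_{q_i}(p)$ is linear in the coefficient vector $q_i$ (so $S_p$ is a Boolean combination of $b$ linear inequalities) is a welcome elaboration of a step the paper merely asserts.
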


\textbf{Remark.} If $G$ is the conjunction of a set of $b$ polynomial inequalities, then the size of the data structure can be significantly improved by using a multi-level data structure, with a slight increase in the query time to $O(\log^b n)$; see, e.g., \cite{Ag-survey}. Roughly speaking, the value of $t$ will now be the dimension of the parametric space of each polynomial defining the query semi-algebraic set, rather than the dimension of the parametric space of the entire semi-algebraic set (which is the conjunction of $b$ such polynomials). 

\subsection{Vertical ray shooting}
\label{sec:ray_shooting}

  Let $\S$ be a collection of $n$ semi-algebraic sets in $\RR^d$, each of complexity at most $b$. For a point $q\in \RR^d$, let $\rho_q  \coloneqq  \{ x_q+\lambda (0, \ldots, 0, 1) \mid \lambda >0\}$ be the (open) ray emanating from $q$ in the $(+x_d)$-direction. The vertical ray-shooting query asks for returning the first element of $S$ intersected by $\rho_q$, as one walks along $\rho_q$; if there is more than one such set, then any one of them can be returned arbitrarily. Note that we define $\rho_q$ to be an open set, so the intersection with the ray at point $q$ does not count. Our goal is to preprocess $\S$
  into a data structure so that a vertical ray-shooting query can be answered quickly.

\paragraph{The overall data structure.}
Our data structure for vertical ray shooting is similar to the one described in Section~\ref{sec:point-location}, 
except that we store an auxiliary data structure at each node of the tree to determine which of its children the query procedure should visit recursively.

Again we fix constants $\delta>0$, $\eps \le \delta/2d^2$,  and $n_0 \coloneqq n_0 (\eps, d, b)$.
If $n \le n_0$, the tree data structure $\T$ consists of a single node that stores $\S$.
So assume that $n > n_0$.
Using the multi-level partitioning algorithm described in Section~\ref{sec:multi_level}, 
we construct a collection $\Omega_0,\ldots,\Omega_d$  that satisfies
the properties mentioned in Theorem \ref{multiLevelPartitionBdry}. Let $\Omega  \coloneqq  \Omega_0 \cup \cdots \cup \Omega_d$.
For each cell $\omega\in\Omega$, it also computes the set $\S_{\omega}\subset S$ of elements that cross $\omega$.
We create the root node~$v$ of~$\T$ and a child~$z_\omega$ for each cell $\omega$. 
We store two auxiliary data structures $\DS_1(v)$ and $\DS_2(v)$ at~$v$, described below, each of which can be constructed in~$O(n^{d+\delta'})$ randomized expected time and requires $O(n^{d+\delta'})$ space, where $\delta' \le \delta/2$ is a constant. Given a query point $q\in\RR^d$, 
$\DS_1 (v), \DS_2(v)$ together determine, in $O(\log n)$ time, the cell $\omega\in\Omega$ that contains the first intersection point of~$\rho_q$ with a set of~$\S$.
We recursively construct the data structure for each subset $\S_\omega$ and attach it to $z_\omega$ as its subtree.

Let $\sigma(n)$ be the maximum size of the data structure on input of size $n$. Then we now obtain the following recurrence:

\begin{equation}
	\label{eq:ray-space}
	\sigma (n) \le \left\{ \begin{array}{ll}
		n & \mbox{for $n \le n_0$,}\\[1mm]
		\displaystyle c_1 n^{d+\delta'} + \sum_{i=0}^d\sum_{\omega\in\Omega_i}\sigma(n_\omega) & \mbox{for $n>n_0$,}
	\end{array} \right. 
\end{equation}
where $c_1 \coloneqq c_1(\delta,b,d)$ is a constant. By following the same analysis as in~(\ref{eq:size-recur}), we can verify that $\sigma(n) \le A n^{d+\delta}$ where $A \coloneqq A(\delta,b,d)$ is a sufficiently large constant.
Again, $\T$ can be constructed in $O(n^{d+\eps})$ randomized expected time.

For a query point $q\in\RR^d$, the first set hit by $\rho_q$ can be computed by traversing a root-to-leaf path in $\T$. Suppose we are at a node $v$. If $v$ is a leaf, then we scan the list $\S_v$ stored at $v$ and return the  element of $\S_v$ that $\rho_v$ first intersects.
Otherwise, we use the auxiliary data structures $\DS_1(v)$ and $\DS_2(v)$ to determine in $O(\log n)$ time the cell $\omega\in\Omega_v$ that
contains the first intersection point of $\rho_q$ with a set of $\S$. We recursively visit the child $z_\omega$ of~$v$. Since the depth of $\T$ is $O(\log n)$, the total query time is $O(\log^2 n)$.

This completes the description of the overall algorithm. What remains is to describe the auxiliary data structures $\DS_1, \DS_2$,  which are used to determine the cell $\omega\in\Omega$ that
contains the first intersection point of a vertical ray with a set of $\S$. We compute a refinement $\Omega^\nabla$ of $\Omega$ and then construct the two auxiliary data structures for each cell $\tau\in\Omega^\nabla$. We first describe how we compute $\Omega^\nabla$ and then describe the two data structures. Let $\PP$ be the set of at most $d+1$ polynomials that were computed  to construct the multi-level partition $\Omega$. Recall that the degree of each polynomial in $\PP$ is $D$, a constant that depends on $\delta, b$, and $d$.

\paragraph{Refinement of $\Omega$.}
We refine the cells of $\Omega$ using the so-called \emph{first-stage CAD (cylindrical algebraic decomposition)} of $\PP$; see, e.g., \cite[Chapter 5]{BPR} for a detailed overview of standard CAD.
That is, this is a simplified version of CAD, presented in \cite{AMS-13}. For the sake of completeness, we describe it here as well.

Roughly speaking, the first-stage CAD for $\PP$ is obtained by constructing a collection $\G \coloneqq \G(\PP)$ of polynomials in the variables $x_{1}, \ldots, x_{d-1}$ (denoted by $\mathrm{Elim}_{X_k}(\PP)$ in \cite{BPR}), whose zero sets contain the projection onto $\RR^{d-1}$  of all pairwise intersections $Z(P_i)\cap Z(P_j)$, $0 \le i < j \le d+1$, as well as 
set of points in $Z(P_i)$, for $0 \le i \le d+1$, of vertical tangency, self-intersection of zeros sets (roots with multiplicity), or a singularity of some other kind.
Having constructed $\G$, the first-stage CAD is obtained as the arrangement $\A(\PP \cup \G)$ in $\RR^d$, where the polynomials in $\G$ are now regarded as $d$-variate polynomials, that is, we lift them in the $x_d$-direction; the geometric interpretation of the lifting operation is to erect a ``vertical wall'' in $\RR^d$ over each zero set within $\RR^{d-1}$ of a $(d-1)$-variate polynomial from~$\G$, and the first-stage CAD is the arrangement formed by these vertical walls and zero sets of polynomials in $\PP$. It follows by construction that the cells of $\A(\PP \cup \G)$ are vertical stacks of ``cylindrical'' cells. In more detail, for each cell $\tau$ of $\A(\PP\cup\G)$, there is unique cell $\varphi$ of the $(d-1)$-dimensional arrangement $\A(\G)$ in $\RR^{d-1}$ such that one of the following two cases occur: 
(i)~$\tau =\{(x,\xi(x)\}\mid x\in \varphi\}$, where $\xi\colon \varphi\rightarrow\RR$ is a continuous 
semi-algebraic function (i.e., $\tau$ is the graph of $\xi$ over $\varphi$); or (ii)~$\tau =\{(x,t) \mid x\in \varphi, t\in (\xi_1(x), \xi_2(x))\}$, where $\xi_i$, $i=1,2$ is a continuous semi-algebraic function on~$\varphi$, the constant function $\varphi \rightarrow \{-\infty\}$, or the constant function $\varphi \rightarrow \{+\infty\}$, and $\xi_1(x)<\xi_2(x)$ for all $x\in\varphi$ (i.e., $\tau$ is a cylindrical cell over $\varphi$ bounded from below (resp.\ above) by the graph of $\xi_1$ (resp.\ $\xi_2$)), or it is unbounded.
As stated in \cite{AMS-13}, the total number of cells in $\A(\PP \cup \G)$ is $D^{O(d)}$,
and each of them is a semi-algebraic set defined by $D^{O(d^4)}$ polynomials of degree $D^{O(d^3)}$.

For a cell $\varphi$ of the $(d-1)$-dimensional arrangement $\A(\G)$, let $\St (\varphi)$ be the stack of cells of~$\A(\PP\cup\G)$ over $\varphi$, i.e., the set of cells that project to $\varphi$. 
We note that $\A(\PP)$ is a refinement of $\Omega$ and that $\A(\PP\cup\G)$ is a refinement of $\A(\P)$, therefore $\A(\PP\cup\G)$ is
a refinement of $\Omega$, denoted by $\Omega^\nabla$, as desired. 

\paragraph{Auxiliary data structures.}
As mentioned above, we construct $\DS_1, \DS_2$ on the cells of $\Omega^\nabla$ to quickly determine the desired cell of $\Omega$ that contains the first intersection point of a vertical ray with $\S$.
\smallskip

\textbf{\textit{The structure $\DS_1$.}}
Fix a cell $\tau$ of $\Omega^\nabla$.
$\DS_1$ is used to determine whether a query ray~$\rho_q$ whose source point $q$ lies in $\tau$ intersects any set of $\S$ inside $\tau$.

For each input set $S \in \S$ that crosses $\tau$, let $\column{S}$ be the set of points $x$ in $\RR^d$ such that 
the vertical ray $\rho_x$ intersects $S\cap \tau$, i.e., $\column{S}$ is the union of the (open) rays in the $(-x_d)$-direction emanating from the points of $S\cap\tau$. $\column{S}$ is a semi-algebraic set whose complexity depends only on $b, d$, and~$D$. 
Let $\column{\S}_\tau \coloneqq \{ \column{S} \mid S \in \S, S\cap\tau \ne \emptyset\}$. $\column{\S}_\tau$ can be computed in $O(n)$ time. 
Using Theorem~\ref{theo:point-location}, we process $\column{\S}$ into a data structure $\DS_1(\tau)$ 
of size $O(n^{d+\delta'})$, for a constant $\delta' \le \delta/2$, so that for a query point $q\in \RR^d$, we can determine in $O(\log n)$ time whether $q \in \bigcup\column{\S}$, i.e., whether $\rho_q$ intersects any set of $\S$ within $\tau$. We construct  $\DS_1(\tau)$ for all cells $\tau$ of $\Omega^\nabla$.  The total size of the data structure, summed over all cells of $\Omega^\nabla$
is $O(n^{d+\delta'})$, and it can be constructed in $O(n^{d+\delta'})$ randomized expected time; the constant hiding in the big-O notation depends on $\delta, D, b$, and $d$.
\smallskip

\textbf{\textit{The structure $\DS_2$.}}
Fix a cell $\tau\in\Omega^\nabla$.
$\DS_2$ is used to determine whether a line parallel to the $x_d$-axis intersects any set of $\S$ inside $\tau$.

For each input set $S \in \S$ that crosses~$\tau$, let $\shadow{S}_\tau$ be the projection of $S\cap\tau$ onto the hyperplane~$x_d=0$.  For a point $q \in \RR^d$, the vertical line (parallel to the $x_d$-axis) through~$q$ intersects $S$ inside $\tau$
if and only if $\shadow{q} \in \shadow{S}_\tau$ (where $\shadow{q}$ is the projection of $q$ onto the hyperplane $x_d=0$).
$\shadow{S}_\tau$ is a semi-algebraic set in $\RR^{d-1}$ whose complexity depends only on $b$ and $D$. Let $\shadow{\S}_\tau \coloneqq \{\shadow{S}_\tau \mid S \in \S, S\cap\tau \ne \emptyset\}$. $\shadow{\S}_\tau$ can be constructed in $O(n)$ time. Using Theorem~\ref{theo:point-location}, we process $\shadow{\S}_\tau$ into a data structure $\DS_2(\tau)$ of size $O(n^{d-1+\delta'})$ so that, for a query point $q\in\RR^d$, we can determine in $O(\log n)$ time whether 
$\shadow{q}\in\bigcup\shadow{\S}_\tau$, i.e., whether the vertical line through $q$ intersects any set of $\S$ inside $\tau$.
We construct $\DS_2(\tau)$  for all cells of $\Omega^\nabla$.
The total size of the data structure, summed over all cells of $\A(\{f\}\cup\G)$, is $O(n^{d-1+\delta'})$, 
and it can be constructed in $O(n^{d-1+\eps})$ randomized expected time; the constant hiding in the big-O notation again depends $\delta, D, b$, and $d$.
\smallskip

\textbf{\textit{Answering a query.}}
Given a query point $q\in\RR^d$, we determine the cell of $\A(\PP\cup\G)$ that contains the first intersection point of $\rho_q$ with a set of $\S$ as follows. 
First, we determine the cell $\tau$ of $\Omega^\nabla$ that contains the query point $q$ by brute force.
Using $\DS_1 (\tau)$, we determine in $O(\log n)$ time whether $\rho_q$ intersects $\S$ inside $\tau$. 
If the answer is yes, then $\tau$ is the desired cell. So assume that the answer is no. Let $\varphi$ be the cell containing $\shadow{q}$, in the $(d-1)$-dimensional arrangement $\A(\G)$. Let $\St(\varphi)  \coloneqq \langle \tau_1, \ldots, \tau_r\rangle$ be the stack of cells over $\varphi$, and let $\tau  \coloneqq  \tau_j$ for some $j \le r$. We visit the cells of $\St(\varphi)$ one by one in order, starting from $\tau_{j+1}$ until we find the first cell $\tau_k$ such that $\shadow{q} \in \bigcup \shadow{\S_{\tau_k}}$. 
Since $q$ lies below $\tau_k$, $\rho_q$ intersects $\S$ inside $\tau_k$ if and only if $\shadow{q} \in \bigcup\shadow(\S_{\tau_k})$. 
If there is no such cell, we conclude that $\rho_q$ does not intersect $\S$. Otherwise $\tau_k$ is the cell of $\A(\PP\cup \G)$ that contains the first intersection point of $\rho_q$ with a set of $\S$.
The total time spent by the query procedure is $O(\log  n)$, where once again the constant hiding in the big-O notation depends on $D, \delta, b$, and $d$.

Putting everything together we obtain the following:
\begin{theorem}
  \label{thm:vertical_ray_shooting}
  Let $\S$ be a collection of $n$ semi-algebraic sets in $\RR^d$, each of complexity at most
  $b$ for some constant $b>1$.
  $\S$ can be preprocessed, in $O(n^{d+\eps})$ randomized expected time,  into a data structure 
  of size $O(n^{d+\eps})$, for any constant $\eps > 0$, so that a vertical ray-shooting query can be answered in $O(\log^2 n)$ time.
\end{theorem}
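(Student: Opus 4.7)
My plan is to organize the data structure just described into a formal recursive construction, verify its correctness, and analyze its size, preprocessing, and query cost. First, I would instantiate the recursion precisely: at a node $v$ with subproblem $\S_v$ of size $m>n_0$, apply Theorem~\ref{efficientlyComputePartition} to the collection of boundaries of the sets in $\S_v$, each a semi-algebraic set of dimension $\vardim=d-1$ and complexity $O_b(1)$. This produces a $(k,\Omega_{b,d}(D))$-partitioning tuple $\P$ of degree $O(D)$ in expected time $O_{b,d}(m\,\poly(D)+e^{\poly(D)})$, so that each realization of a sign condition of $\P$ meets the boundaries of at most $m/D$ sets. The child subproblems have size $O(m/D)$, and there are $O(D^d)$ of them; for a sufficiently large constant $D$, the tree $\T$ thus has depth $O(\log n)$.

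Next, I would bound the size and build time of $\T$ together with its auxiliary structures. At each node I compute the refined cylindrical arrangement $\A(\{f\}\cup\G)$ of $O(1)$ cells (via the first-stage CAD recipe above) and, inside each cell $\tau$, invoke Theorem~\ref{theo:point-location} on the family $\{\column{S} \mid S\in\S_v,\ S\cap\tau\neq\emptyset\}$ (respectively on its shadow counterpart) to obtain $\DS_1(\tau)$ and $\DS_2(\tau)$ of size and expected build time $O_{b,d}(m^{d+\eps})$. Summing over the $O(1)$ cylindrical cells at the node and applying the standard recurrence $T(m)=O(D^d)\cdot T(m/D)+O(m^{d+\eps})$---which solves to $T(n)=O(n^{d+\eps})$ once $D$ is large enough that $D^d\cdot D^{-(d+\eps)}<1$---yields total size and expected preprocessing cost $O_{b,d}(n^{d+\eps})$.

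For the query, given $q$ I descend a single root-to-leaf path of $\T$. At each node I first locate in $O(1)$ time the cylindrical cell $\tau\in\A(\{f\}\cup\G)$ containing $q$, then query $\DS_1(\tau)$ in $O(\log n)$ time. If $\rho_q$ meets $\S_v$ inside $\tau$, I recurse into the child of $v$ indexed by the sign condition of $\P$ realized on $\tau$. Otherwise I walk up the vertical stack $V(\varphi)$ over the unique cell $\varphi\in\A(\G)$ with $\shadow{q}\in\varphi$, querying $\DS_2(\tau_j)$ for each successive cell $\tau_j$ above $\tau$; since $|V(\varphi)|=O(1)$, the total per-node cost remains $O(\log n)$. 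The cylindrical structure of $\A(\{f\}\cup\G)$ guarantees that, for any $\tau_j$ above $\tau$, $\rho_q$ enters $\tau_j$ from below, so $\rho_q$ meets some $S\in\S_v$ inside $\tau_j$ iff $\shadow{q}$ lies in the shadow of $S\cap\tau_j$; hence the first $\tau_j$ for which $\DS_2$ answers positively is the cell containing the first hit. Summing $O(\log n)$ per level over $O(\log n)$ levels gives the claimed $O(\log^2 n)$ query time.

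The main technical obstacle will be justifying the cylindrical-stack traversal---namely, that once $\DS_1(\tau)$ rules out an intersection inside $\tau$, any further intersection of $\rho_q$ with $\S_v$ within the node's realm occurs at a cell strictly above $\tau$ in $V(\varphi)$, and that $\DS_2(\tau_j)$ correctly detects it via the $(d-1)$-dimensional projection. This relies on the structural dichotomy of cells in $\A(\{f\}\cup\G)$ (either the graph of a continuous semi-algebraic function over a base cell $\varphi$, or an open cylinder bounded by two such graphs) and on the fact that a vertical ray entering such a cell from below traverses it monotonically in $x_d$. Once this geometric lemma is pinned down, the remainder is bookkeeping driven by Theorems~\ref{efficientlyComputePartition} and~\ref{theo:point-location} together with the standard multi-level recurrence.
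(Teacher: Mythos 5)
Your proposal is correct and follows essentially the same route as the paper: a hierarchical partition tree built via Theorem~\ref{efficientlyComputePartition} on the boundaries, with the first-stage CAD refinement $\A(\{f\}\cup\G)$ at each node and the two auxiliary structures $\DS_1$, $\DS_2$ (built from Theorem~\ref{theo:point-location} on the sets $\column{S}$ and $\shadow{S}$) used to identify the cell containing the first hit before descending. The query procedure, the stack-traversal correctness argument, and the standard multi-level recurrence for the $O(n^{d+\eps})$ bounds all match the paper's proof.
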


\subsection{Cutting algebraic curves into pseudo-segments}
\label{sec:cuttings_curves}

Let $\C$ be a set of $n$ algebraic curves in $\RR^2$, each of degree at most $b$, such that no two of them share a component.
Sharir and Zahl \cite{SZ-17} presented a technique for cutting $\C$ into  $O(n^{3/2}\log^{O(1)}n)$
pseudo-segments by lifting $\C$ into a set $\hat{\C}$ of curves in $\RR^3$ as in~\cite{ESZ} and 
exploiting  Proposition~\ref{guthProp} for algebraic curves in $\RR^3$; see \cite[Theorem 1.1]{SZ-17}.%
\footnote{The analysis in~\cite{SZ-17} does not exploit the machinery of multi-level partitioning, and the zero set of the underlying partitioning polynomial is handled directly. 
}
Using Theorem~\ref{thm:efficientlyComputePartition}, we can convert their technique into an efficient algorithm  though the number of pseudo-segments into which $\C$ is cut increases slightly to $O(n^{3/2+\eps})$, for any constant $\eps>0$. We briefly sketch the algorithm and refer the reader to the original paper~\cite{SZ-17} for details.

  Following the technique in~\cite{ESZ} (see also the proof of Theorem~1.1 in \cite{SZ-17}), we lift each curve $C$ into  a curve $\hat{C}$ in $\RR^3$ of degree at most $b^2$. Let $\hat{\C}$ be the resulting set of curves in $\RR^3$; $\hat{\C}$ can be constructed in $O(n)$ time.
   A pair $C_1,C_2\in\C$ intersect more than once if and only if $\hat{C}_1, \hat{C_2}$ form a depth cycle in the $z$-direction, i.e., there are four points $(x_1,y_1,z_1), (x_1,y_1,z'_1), (x_2,y_2,z_2)$, and $(x_2,y_2,z'_2)$ such that $(x_1,y_1,z_1), (x_2,y_2,z_2)\in C_1$, $(x_1,y_1,z'_1), (x_2,y_2,z'_2)\in C_2$,  $z'_1>z_1$, and $z_2>z'_2$. As in~\cite{SZ-17}, we cut $\hat\C$  to eliminate all depth cycles; the projections of resulting arcs give the desired cutting of $\C$ into pseudo-segments.

   We choose two sufficiently large constants $n_0$ and $D$, depending on $b$ and $\eps$. If $n\le n_0$, we simply cut the curves in $\hat\C$ at points where their $xy$-projections intersect, i.e., cut at all pairwise intersection points of $\C$. If $n > n_0$, using Theorem~\ref{thm:efficientlyComputePartition}, we compute, in $O(n\poly(D)+e^{\poly(D)})$ randomized expected time, a partioning polynomial $P$ of degree at most $D$ such that each cell of $\RR^3\setminus Z(P)$ intersects at most $c_1 n/D^2$ curves of 
  $\hat\C$; $\RR^3\setminus Z(P)$ has at most $c_2 D^3$ cells. Here $c_1, c_2>0$ are constants that depend on $b$ but are independent of $D$ (and $\eps$). For each cell $\tau \in \RR^3\setminus Z(P)$, let $\hat\C_\tau$ be the set of curves that intersect $\tau$. The above algorithm computes $\hat\C_\tau$  for every cell $\tau$.

  Applying the procedure in~\cite{SZ-17} to $\hat\C$ and $P$, we cut the curves in $\hat\C$ into $O(D^2 n)$ pieces, resulting into a set $\hat\Gamma$ of arcs, so that all ``surviving'' depth cycles lie within a single cell of $\RR^3\setminus Z(P)$, i.e., if there are two arcs $\gamma, \gamma'\in\Gamma$ forming a depth cycle, then 
  both $\gamma$ and $\gamma'$ are contained in the same cell of $\RR^3\backslash Z(P)$ (cf. Lemma~A.2 in~\cite{SZ-17}; see also~\cite{AS-18})). For each cell $\tau\in\RR^3\setminus Z(P)$, we call the procedure recursively on the portions of $\hat\C_\tau$ that lie inside $\tau$. These recursive calls further refine the arcs of $\Gamma$ into smaller arcs. 

  When the above algorithm terminates, we return the $xy$-projections of the resulting 3D arcs as the desired pseudo-segments of $\C$. 

  The correctness of the algorithm follows from the proof of Theorem~1.1 in~\cite{SZ-17}. Let $T(n)$ denote the maximum expected running time of the algorithm for a set of $n$ curves, then we obtain the following recurrence:

\begin{equation}
	\label{eq:cut-time}
	T(n) \le \left\{ \begin{array}{ll}
		bn^2 & \mbox{for $n \le n_0$,}\\[1mm]
		\displaystyle c_2 D^3 T(c_1n/D^2) + c_3 (n\poly(D)+e^{\poly(D)}) & \mbox{for $n>n_0$,}
	\end{array} \right. 
\end{equation}
where $c_3 \coloneqq c_3(b,\eps)$ is a constant.
By induction on $n$, it can be verified that the solution of the above recurrence is $T(n) \le A n^{3/2+\eps}$, where $A$ is a sufficiently large constant that depends on $b$ and $\eps$. A similar argument shows that the algorithm cuts $\C$ into  $O(n^{3/2+\eps})$ pseudo-segments.
We thus conclude the following:
\begin{theorem}
  \label{cutting-curves}
  Let $\C$ be a set of $n$ algebraic curves in $\RR^2$, each of degree at most $b$, with no two sharing a common component, and let $\eps>0$ be a constant. Then $\C$ can be cut,
	in $O(n^{3/2 + \eps})$ randomized expected time, into 
  $O(n^{3/2 + \eps})$  Jordan arcs so that each pair of arcs intersect at most once. 
\end{theorem}

Following the same technique as in~\cite{SZ-17}, we can extend the above theorem to the following:
\begin{theorem}
  \label{cutting-arcs}
  Let $\Gamma$ be a set of $n$ Jordan arcs in $\RR^2$, each of which is contained in an algebraic curve of 
	degree at most $b$ and every pair of which have finite intersection, let $\chi$ be the number of 
	pairwise intersections between the arcs of $\Gamma$, and let $\eps>0$ be a constant. 
	Then $\Gamma$ can be cut,
	in $O(n^{1+\eps}+n^{1/2 - \eps}\chi^{1+\eps})$ randomized expected time, into 
  $O(n^{1+\eps}+n^{1/2 - \eps}\chi^{1+\eps})$   Jordan arcs so that each pair of arcs intersect at most once. 
\end{theorem}

Combining Theorem~\ref{cutting-curves} with the technique in~\cite{SZ-17} and applying the algorithm by Agarwal and Sharir~\cite{AS-05} for computing a set of marked faces in an arrangement of pseudo-segments, we obtain the following:

\begin{cor}
	\label{cor:marked-faces}
	Let $\C$ be a set of $n$ algebraic curves in $\RR^2$, each of degree at most $b$, with no two sharing a common component, let $\Q$ be a set of $m$ points in $\RR^2$, and let $\eps>0$ be a constant. The faces of $\A(\C)$ that contain a point of $\Q$ can be computed in $O(n^{3/2+\eps}+m^{1+\eps}+m^{2/3-\eps}n^{2/3+\eps})$ randomized expected time.
\end{cor}

\subsection{Eliminating depth cycles for triangles in ${\RR}^3$}

Aronov, Miller, and Sharir \cite{AMS-19} applied Proposition~\ref{guthProp} to lines in three dimensions in order to prove that $n$ pairwise disjoint non-vertical triangles in~${\RR}^3$ can be cut into $O(n^{3/2 + \eps})$ pieces that form a \emph{depth order}, for any $\eps > 0$ (the application of multi-level partitioning is not needed in the analysis of \cite{AMS-19}).
This extended and generalized an earlier result of Aronov and Sharir \cite{AS-18}  that $O(n^{3/2} \polylog{n})$ cuts are always sufficient to eliminate all depth cycles in a collection of $n$ pairwise disjoint non-vertical lines in ${\RR}^3$. Both bounds are close to worst-case optimal.

In \cite{AEZ-19}, three of the authors of the current paper explain how the argument of \cite{AS-18} can be turned into an efficient randomized algorithm which constructs the required cuts for a collection of lines in three dimensions; the expected running time of the algorithm and the number of cuts produced are near optimal, for the worst-case set of input lines.  In \cite{AEZ-19}, they proceed by replacing the partitioning guaranteed by Proposition~\ref{guthProp} by a not-quite polynomial decomposition described in that paper (in fact, this decomposition consists of a partitioning polynomial and a semi-algebraic set) and spelling out the details of how the remainder of the construction from \cite{AS-18} can be made fully effective using tools from computational real algebraic geometry, such as Proposition~\ref{pointLocationThm}.  (We note in passing that now we can use Theorem~\ref{efficientlyComputePartition}, rather than the decomposition from \cite{AEZ-19}, to accomplish the same task.)

Now we turn to the case of triangles.  As in \cite{AS-18}, the main reason the combinatorial bound of \cite{AMS-19} was not presented as an efficient and effective \emph{construction} was the non-constructive nature of Proposition~\ref{guthProp}.
The previous work of Aronov~\etal \cite{AEZ-19} addressed the case of lines in $3$-space, by integrating
the three-dimensional space decomposition they developed with the analysis in \cite{AS-18}.  In an earlier version of the paper, they also presented a brief description for the setting of triangles in $3$-space and the integration with the analysis in \cite{AMS-19}; for technical reasons the fact that the decomposition constructed in \cite{AEZ-19} is not a polynomial partition prevented its use as direct drop-in replacement for Proposition~\ref{guthProp}; see \cite{AMS-19} for a discussion of the somewhat subtle dependence of their analysis on the partitioning being polynomial.  So, as a result, the setting of triangles was dropped completely from the revised (and updated) version of~\cite{AEZ-19}, and it was left as an open problem whether the existing tools were sufficient to yield a corresponding efficient construction for cycle elimination for triangles.

In this section, equipped with Theorem~\ref{efficientlyComputePartition} we revisit the setting of triangles in $3$-space, and obtain an efficient algorithm to cut $n$ given triangles, as above, into $O(n^{3/2 + \eps})$ pieces that form a depth order.
Since Theorem~\ref{efficientlyComputePartition} produces an actual polynomial partitioning with properties similar to that in Proposition~\ref{guthProp}, integration with the analysis in \cite{AMS-19} is rather simple.

Let $\T$ be a set of $n$ pairwise disjoint non-vertical triangles
in ${\RR}^3$.
The analysis in \cite{AMS-19} recursively subdivides space, by applying a polynomial partitioning
of constant degree $D > 1$ for the $3n$ lines supporting the edges of the triangles in $\T$, where at every step in the recursion one draws curves on each triangle. These curves are eventually used to cut the triangles into pieces.
A close inspection of the analysis in \cite{AMS-19} shows that the total number of pieces is in fact proportional to the number of these ``cutting'' curves up to a polylogarithmic factor. Moreover, the time to produce the triangle pieces once these curves are given is also proportional to the number of curves, up to a polylogarithmic factor.
Therefore we focus on the construction of the cutting curves.

Let $f$ be a partitioning polynomial of degree $D > 1$ constructed at the current recursive step.
A cutting curve drawn on a triangle $\Delta \in \T$ is one of following three types:\footnote{In fact, the triangles $\Delta$ under consideration belong to a subset of $\T$ determined by their interaction with the cells in the partition; we omit these details in this discussion.}

\smallskip

\noindent{\bf (i)} \emph{Traces}: We draw $Z(f) \cap \Delta$ on $\Delta$, unless $\Delta \subset Z(f)$.
The underlying curve has degree at most $D$.
Regarding the complexity of this operations, we first need to determine whether $\Delta \subset Z(f)$, which can be done by computing \emph{resultants} \cite[Chapter~4]{BPR}.
If $\Delta \not\subset Z(f)$ we represent the curve  $\Delta \cap Z(f)$ by restricting $f$ to the plane of $\Delta$ and then computing the intersection of this curve with $\bd{\Delta}$. This latter operation exploits \emph{root representation} \cite[Chapter 10]{BPR}.
Overall, all these operations can be performed in $O(\poly(D))$ time per triangle \cite{BPR}.

\smallskip

\noindent{\bf (ii)} \emph{Critical shadows}: Let $S$ be the common zero set of $f$ and its $z$-derivative; this is the set of singular points or points of $z$-vertical tangency of $Z(f)$.
Let $H$ be the ``vertical curtain'' spanned by $S$, that is the union of all vertical lines that pass through points of $S$. As pointed out in \cite{AMS-19}, $H$ is a two-dimensional algebraic variety of degree $O(D^2)$. We next draw on $\Delta$ the curve $H \cap \Delta$.
Computing this curve is done by first constructing the solution set $S$ of the system $\{f = 0,\partial f/ \partial z =0\}$ in~$O(\poly(D))$ time, using properties of resultants and root representation \cite{BPR}, as above. Then the restriction to~$\Delta$ (recall that $\Delta$ is non-vertical) is done by projecting $S$ onto the plane containing $\Delta$, and then
intersecting this projection with $\bd{\Delta}$.
This can be again done in $O(\poly(D))$ time.

\smallskip

\noindent{\bf (iii)} \emph{Wall shadows}:
This step involves the construction of the \emph{vertical decomposition} of the zero set of
$k \coloneqq O(\log_D{n})$ polynomials $f$ (each of degree at most $D$) associated with all the ancestors of the current recursion node of our construction.\footnote{We refer the reader to \cite{AMS-19} for further details concerning the vertical decomposition of an arrangement of algebraic varieties.}
We denote by $F$ the product of these polynomials, and by $\V \coloneqq \V(Z(F))$ their vertical decomposition.
Given a triangle $\Delta$ under consideration, and an open three-dimensional cell $\nu$ of $\V$ meeting $\Delta$, we draw the one-dimensional boundary of $\nu \cap \Delta$ on~$\Delta$. If $\Delta \subset Z(F)$ is part of the floor or ceiling of such a cell $\nu$, we draw the boundary of the closure of $\nu$ on $\Delta$.

As pointed out in \cite{AMS-19}, the construction of $\V$ is based on producing a collection of
$O(k^2) = O(\log^2{n})$ curves, each of degree $O(D^2)$, where these curves are obtained by the
intersection of pairs of surfaces~$Z(f)$, and the loci of singular points and points with $z$-vertical tangencies on the individual surfaces.
Once these curves are produced, they are projected onto the plane, in order to produce a corresponding planar vertical decomposition (which is then lifted in the $z$-direction).
Omitting any further details, the construction of $\V$ can be completed in $O(\poly(D)\polylog{n})$ time.
The analysis in \cite{AMS-19} shows that on each triangle $\Delta$ we draw $O(D^4 \log^2{n})$ curves
overall. Note that by construction each cell of $\V$ is a vertical prism, whose floor and ceiling are portions of some $Z(f)$, and its edges are portions of either $z$-vertical segments or curves of degree at
most $O(D^2)$. Given this property, and applying similar considerations as in cases (i) and (ii), we conclude that in time $O(\poly(D)\polylog{n})$ we can produce all cutting curves of type (iii) on each triangle $\Delta$.

Recall that the analysis in \cite{AMS-19} recursively subdivides space using a partitioning polynomial $f$
of degree $D$, where, in each cell $\tau \in \RR^3 \setminus Z(f)$ the triangles $\Delta$ meeting $\tau$ are processed in turn. A crucial step of the analysis is that the recursive subproblem associated with the cell~$\tau$ only inherits the triangles that ``pierce''~$\tau$, that is, triangles whose boundary meets~$\tau$. The remaining triangles (referred to as ``slicing'') are not passed down the recursion, and we dispose of them as soon as we draw their corresponding cutting curves, as described above. 

It thus follows from the above discussion, the analysis in \cite{AMS-19}, and
Theorem~\ref{efficientlyComputePartition} that the expected running time to draw all cutting
curves over the triangles $\Delta \in \T$ satisfies the recurrence
\[
T(n) =  O(D^3) \cdot T(cn/D^2) + O( \poly(D) \cdot n \polylog{n}) ,
\]
where $c > 0$ is a constant.
This recursive relation is similar to the one shown in \cite{AMS-19} for
bounding the total number of curves drawn on the triangles in $\T$, 
and has a similar asymptotic solution.
Using induction on $n$, it is easy to verify that $T(n) = O(\poly(D)n^{3/2+\eps})$, once we pick a sufficiently large constant $D>0$; $\eps>0$ depends on~$D$ and can be made arbitrarily small by increasing~$D$, so we can rewrite the bound as $O(n^{3/2 + \eps})$. In summary, we have shown:

\begin{theorem}
  \label{depth_cycles}
  Let $\T$ be a collection of $n$ pairwise-disjoint non-vertical triangles %
  in~${\RR}^3$.
  Then for any constant $\eps > 0$, we can cut the triangles in $\T$ into $O(n^{3/2 + \eps})$ pieces,
  bounded by algebraic arcs of constant maximum degree $\delta = \delta(\eps)$,
  which form a depth order.
  These pieces can be produced in expected $O(n^{3/2+\eps})$ time.
\end{theorem}

\noindent\textbf{Remark.}
The authors in~\cite{AMS-19} sketch an extension of their result to the case in which the triangles in $\T$ are allowed to be vertical.  We believe that this extension can also be made algorithmic at no cost in the asymptotic running time or the number of pieces produced.  However, we omit the details from here and leave it as an exercise for the interested reader,

\section{Conclusion}
We presented an efficient randomized algorithm, based on singly exponential quantifier-elimination procedure~\cite{BPR}, that computes a partitioning polynomial for a given family of semi-algebraic sets in $\RR^d$. We also extended this algorithm to compute a partition of $\RR^d$ into (relatively open) semi-algebraic cells so that each cell is crossed by a few elements of $\S$, thereby extending the result by Matou\v{s}ek and Pat\'akov\'a for semi-algebraic sets. Finally, we presented a few applications of these results. We conclude by mentioning a few open problems:
\begin{itemize}
	\item[(i)] Is there a more direct algorithm for computing the partitioning polynomial (e.g. as in~\cite{AMS-13}) that does not rely on quantifier elimination and whose running time is polynomial in both $n$ and the degree of the partitioning polynomial? 
	\item[(ii)] The degree of the partitioning polynomial at each stage of the multi-level partitioning scheme increases rapidly, which precludes us choosing a non-constant value of the degree $D_i$ at each stage of the algorithm. Is there a different multi-level partitioning scheme in which the degree increases in a more controlled way and which allows us to choose $D_i$ to be $n^\delta$ for some small constant $\delta<1$? Such a scheme would replace the $n^\eps$ factors with $\polylog(n)$ factors in all the results in Section~\ref{sec:applications}.
	\item[(iii)] Theorem~\ref{thm:range-search} states that a semi-algebraic range query can be answered in $O(\log n)$ time using a data structure of $O(n^{t+\eps})$ size where $t$ is the dimension of the query space, which can be much larger than $d$. An open question is whether the size of the data structure be improved to $n^d$,  say, even when the query range is of the form $f(x) \ge 0$, where $f$ is a $d$-variate polynomial. For example, can a set $P$ of $n$ points in $\RR^2$ be preprocessed into a data structure of size $O(n^2)$ so that the number of points lying inside a query disk be computed in $O(\log  n)$ time? The best-known data structure for disk-counting query needs $O(n^3)$ space.
\end{itemize}

\paragraph{Acknowledgments.} The authors thank Saugata Basu and Micha Sharir for helpful discussions and two anonymous reviewers for their useful comments on the paper.

\end{document}